\theoremstyle{plain}
\newtheorem{lemma}{Lemma}
\newtheorem{definition}{Definition}
\newtheorem{cor}{Corollary}
\newtheorem{proposition}{Proposition}
\newtheorem{remark}{Remark}
\newtheorem{example}{Example}
\newtheorem*{problem*}{Problem}
\newtheorem*{theorem*}{Theorem}
\newtheorem{assumption*}{Assumption}
\declaretheorem[name=Theorem]{thm}
\newcommand{\myvar}[1]{\bm{#1}}
\newcommand{\myvarfrak}[1]{\bm{\mathfrak{#1}}}
\newcommand{\myvardot}[1]{\dot{\myvar{#1}}}
\newcommand{\myset}[1]{\mathscr{#1}}
\title{
High-order  Barrier Functions: Robustness, Safety and Performance-Critical Control
}
\author{Xiao Tan, Wenceslao Shaw Cortez and Dimos V. Dimarogonas 
\thanks{This work was supported by the Swedish Research Council (VR),
the Swedish Foundation for Strategic Research (SSF), the Knut and Alice
Wallenberg Foundation (KAW), and EU CANOPIES project.
The authors are with the School of EECS, Royal Institute of Technology (KTH), 100 44 Stockholm, Sweden (Email: 
        {\tt\small xiaotan, wenscs, dimos@kth.se}).}
}
\begin{document}

\maketitle
\thispagestyle{plain}
\pagestyle{plain}

\begin{abstract}
  In this paper, we propose a notion of high-order (zeroing) barrier functions that generalizes the concept of zeroing barrier functions and guarantees set forward invariance by checking their higher order derivatives. The proposed formulation guarantees asymptotic stability of the forward invariant set, which is highly favorable for robustness with respect to model perturbations. No forward completeness assumption is needed in our setting in contrast to existing high order barrier function methods. For the case of controlled dynamical systems, we relax the requirement of uniform relative degree and propose a singularity-free control scheme that yields a locally Lipschitz control signal and guarantees safety. Furthermore, the proposed formulation accounts for ``performance-critical" control: it guarantees that a subset of the forward invariant set will admit any existing, bounded control law, while still ensuring forward invariance of the set. Finally, a non-trivial case study with rigid-body attitude dynamics and interconnected cell regions as the safe region  is investigated.
 
\end{abstract}

\section{Introduction}

 Optimizing system performance while satisfying safety guarantees is an important goal for controlling dynamical systems. For a general nonlinear system {wherein} an analytical solution is difficult to compute, model predictive control (MPC) and barrier function techniques are two relevant tools to guarantee constraint satisfaction i.e., safety. MPC  \cite{garcia1989model,mayne2000constrained,camacho2013model} is a powerful tool that takes all safety constraints into account at every discrete time instant and solves an optimization problem up to a {finite horizon} with the system performance metric as the objective function. This inevitably brings heavy computational burden for online implementation and the resulting controller {provides} constraint satisfaction and {optimality}. Barrier functions, on the other hand,  provide a system-level certificate that guarantees the forward invariance of a set, {usually referred to} as the ``safety set", that can be designed  in parallel to {a performance-optimizing controller}  \cite{romdlony2014uniting}. This modular formulation gives designers {greater flexibility}.
 
There are {several types of} ``barrier functions'' in the literature. One is related to  \textit{barrier Lyapunov functions} \cite{tee2009barrier} that were introduced  and extensively studied for constrained control problems. Barrier Lyapunov functions are constructed {so that they} tend to infinity when the {system's} state approaches the {boundary of the safety set}. Using backstepping techniques, barrier Lyapunov functions are {extendable} to high-order control systems.  The term ``barrier" is taken from  {optimization} theory \cite{boyd2004convex} {wherein} barrier/penalty terms {are used} to avoid exploration of unwanted regions.  An extension of this methodology, later {coined} \textit{reciprocal barrier functions}\cite{Ames2017}, is presented in \cite{Ames2014}. Reciprocal barrier functions also blow up at the safety boundary and guarantee forward invariance of the safe set if a Lyapunov-like condition holds. Another form of barrier functions, also known as barrier certificates, arise from system verification. {Those barrier certificates are} Lyapunov-like functions {that are used} to verify safety of nonlinear and stochastic systems \cite{prajna2006barrier,prajna2007framework}. In {those methods}, the unsafe region is described by the superlevel set  of a real-valued function and if the derivative of this function is negative definite, then the system is verified to be safe. The controlled version is also discussed in \cite{wieland2007constructive}. {A major} limitation of  reciprocal barrier functions is that {a large control signal is required} when the {system's} state is close to the boundary of the safety set, making it sensitive to noises in the system. {On the other hand, barrier certificates ensure invariance of every level set}, which indicates that the condition imposed is too strong and restrictive.

{Recently}, \cite{Xu2015a,Ames2017} proposed \textit{zeroing barrier functions} (ZBFs) that are well-defined both inside and outside the safe set, and only {ensure} invariance of the safe set. More importantly,  ZBFs provide robustness properties with respect to model perturbations. {Robustness} is {addressed} by {ensuring} asymptotic stability of the forward invariant set and an Input-to-State stability property of the safe set is established. \textit{Zeroing control barrier functions} (ZCBFs), the controlled version of zeroing barrier functions, { originally addressed} relative degree one constraints, and {robustness was further studied in} \cite{jankovic2018robust}. This tool is applicable in a wide range of applications, e.g., in  multi-robot coordination, verification and control \cite{glotfelter2017nonsmooth,Wang2017a,lindemann2018control}.
  
Recently,  \cite{nguyen2016exponential,xu2018constrained,xiao2019control,wences2020correct} have started to investigate conditions on the higher order derivatives of constraint functions to guarantee set invariance. This is motivated by two facts: 1)  by examining the conditions on the high-order  derivative terms, an alternative method  to find barrier functions {is} provided; 2)  many constraints have higher relative degrees with respect to the underlying system, e.g., a position constraint {for} a mechanical system. Thus a systematic framework for higher order barrier functions is highly relevant for real-world applications. Although many important results have been obtained in \cite{nguyen2016exponential,xu2018constrained,xiao2019control,wences2020correct}, we argue that the formulations therein have certain limitations in the sense discussed below and can be considered as special cases of the results presented here. 
 
In this paper, we propose a novel definition of high-order barrier functions  (HOBFs) that generalizes the concept of zeroing barrier functions \cite{Xu2015a,Ames2017}  and the formulations in \cite{nguyen2016exponential,xu2018constrained,xiao2019control,wences2020correct}. In our formulation, extended class $\mathcal{K}$ functions are incorporated instead of linear functions \cite{nguyen2016exponential,xu2018constrained} or class $\mathcal{K}$ functions \cite{xiao2019control}. Apart from this definition generalization,  the contributions of this paper are stated as follows:
 \begin{enumerate}
     \item  In our formulation, the forward completeness assumption in \cite{xu2018constrained,xiao2019control} is no longer required. More importantly,  the forward invariant set is proven to be asymptotically stable for the first time in an HOBF setting and inherits all the robustness properties of ZBFs as in \cite{Xu2015a}. 
     
     \item For the controlled system, we {allow} the relative degree {to vary} in the safe region, which relaxes the uniform relative degree assumption in \cite{xu2018constrained,xiao2019control}. The high-order control barrier function is constructed by introducing a truncating function to the original constraint. The obtained control law is shown to be Lipschitz continuous and the safe set is guaranteed to be forward invariant.  
     
     \item  In many applications,  a pre-designed nominal control law must be implemented without modification in a desired region to ensure satisfaction of the task. This is coined a \textit{performance-critical task}. {Most} ZCBF methods aim {to be} minimally invasive, but do not specify when the nominal control will be implemented \textit{a priori}. {Our formulation allows one to design}  performance-critical regions where the nominal input will be used.
 \end{enumerate}

\textit{Notation}: The Lie derivatives of a function $h(\myvar{x})$ for the system $\myvardot{x} = \myvarfrak{f}(\myvar{x}) + \myvarfrak{g}(\myvar{x}) \myvar{u}$ are denoted by $L_{\mathfrak{f}} h = \tfrac{\partial h}{ \partial \myvar{x}} \myvarfrak{f}(\myvar{x}) $ and $L_{\mathfrak{g}} h = \tfrac{\partial h}{ \partial \myvar{x}} \myvarfrak{g}(\myvar{x})$, respectively.  The notations $\prec, \preceq$ and $\succ, \succeq$ are used to denote element-wise vector inequalities. The interior and boundary of a set $\myset{A}$ are denoted $\text{Int}(\myset{A})$ and $\partial \myset{A}$, respectively. The distance from  a point $\myvar{x}$ to a set $\myset{A}\subset \mathbb{R}^n$ is given by $\| \myvar{x} \|_{\myset{A}} := \inf_{\myvar{w}\in \myset{A}} \| \myvar{x} - \myvar{w} \| $.  The tangent cone  to the set $\myset{A}$ at the point $\myvar{x}$ is defined as $\mathcal{T}_{\myset{A}}(\myvar{x}) : = \{ \myvar{z}: \liminf_{\tau \to 0} \| \myvar{x}+ \tau \myvar{z}  \|_{\myset{A}}/ \tau = 0 \}$.   Denote $ \mathbb{R}^{n}_{+} := \{ \myvar{a} \in \mathbb{R}^n: a_i \ge 0 \} $, where $a_i$ corresponds to the $i$th component of $\myvar{a}$. We note that $ \myvar{x}\in \partial  \mathbb{R}^{n}_{+} $ if $ \myvar{a}^\top \myvar{x} = 0 $ for some nonzero vector $ \myvar{a} \in \mathbb{R}^{n}_{+} $ and $ \myvar{x}\in Int( \mathbb{R}^{n}_{+}) $ if $ \myvar{a}^\top \myvar{x} >0 $ for all nonzero vectors $ \myvar{a} \in \mathbb{R}^{n}_{+} $.

\section{High-order barrier functions}
In this section, we {propose} a novel HOBF definition, which {generalizes} the zeroing barrier functions from \cite{Xu2015a,Ames2017}. The proposed HOBF formulation is more general {than} previous constructions \cite{xu2018constrained,xiao2019control,wences2020correct}, and {is robust} to perturbations.
 
Consider a nonlinear system on $\mathbb{R}^n$,
\begin{equation}\label{eq:auto_dyn}
\dot{\myvar{x}} = \myvarfrak{f}(\myvar{x})
\end{equation}
with $ \myvarfrak{f} $ locally Lipschitz continuous. Denote by $ \myvar{x}(t,\myvar{x}_0) $ the solution of $ \eqref{eq:auto_dyn} $ starting from $\myvar{x}(t_0) = \myvar{x}_0$.  A set $\myset{A} \subset \mathbb{R}^n$ is called \textit{forward invariant}, if for 
any initial condition $ \myvar{x}_0 \in \myset{A}  $, $\myvar{x}(t,\myvar{x}_0)  \in \myset{A}$ for all $t \in I(\myvar{x}_0 )$. Here $I(\myvar{x}_0 )$ denotes the maximal time interval of existence of $\myvar{x}(t,\myvar{x}_0) $.

 Let $h(\myvar{x}): \mathbb{R}^n \to \mathbb{R}$ be a continuously differentiable function. We define the associated sets as $\myset{C}_{h} = \{\myvar{x} \in \mathbb{R}^n: h(\myvar{x}) \geq 0\}$, $\myset{C}_{h,\delta} = \{\myvar{x} \in \mathbb{R}^n: h(\myvar{x}) \geq \delta\}.$

{H}igh-order barrier functions are {dependent on} extended class $\mathcal{K}$ {functions}, which are defined as follows:
\begin{definition}[Extended class $\mathcal{K}$ function  \cite{Ames2017}] \label{def:extended class K}
A continuous function $\alpha:(-b,a) \to (-\infty,\infty)$ for $a,b \in \mathbb{R}_{>0}$ is an \textit{extended class $\mathcal{K}$ function} if it is strictly increasing and $\alpha(0) = 0$.
\end{definition} 
\noindent Note for clarity, the extended class $\mathcal{K}$ functions addressed here will be defined for $a,b = \infty$.

\subsection{High-order barrier functions}
The class of high-order barrier functions considered in this paper is defined as follows. {G}iven a $r^{th}$-order differentiable function $h: \mathbb{R}^n \to \mathbb{R}$, and  sufficiently smooth extended class $\mathcal{K}$ functions $\alpha_1(\cdot),\alpha_2(\cdot), \cdots,\alpha_{r}(\cdot)$, we define a series of functions as
 \begin{equation}\label{eq:a_series_psi}
 \psi_0(\myvar{x}) = h(\myvar{x}), \psi_k(\myvar{x}) = (\tfrac{d}{dt} + \alpha_{k} ) \psi_{k-1}, \  1\le k \le r,
 \end{equation}
{with the corresponding sets:} $ \myset{C}_{\psi_{k-1}}=\{ x:\psi_{k-1}(\myvar{x}) \ge 0 \} $.

\begin{definition}[High-order (zeroing) barrier function] \label{def:hobf}
A $r^{th}$-order differentiable function $ h:\mathbb{R}^n \to \mathbb{R} $ is a \textbf{high-order (zeroing) barrier function} of degree $ r $ for system \eqref{eq:auto_dyn} if there exist differentiable extended class $ \mathcal{K} $ functions $ \alpha_{k}, k = 1,2,\cdots,r $ and an open set $\myset{D}$ with $\myset{C}:=\bigcap_{k = 1}^{r} \myset{C}_{\psi_{k-1}} \subset \myset{D} \subset \mathbb{R}^n $such that
 	\begin{equation}\label{eq:hobf_def}
 	\psi_r(\myvar{x}) \ge 0, \quad \forall \myvar{x} \in \myset{D},
 	\end{equation}
 	with $\psi_k(\myvar{x})$ defined in \eqref{eq:a_series_psi}. 
\end{definition}

\begin{proposition} \label{prop:hobf}
 Consider an autonomous system in \eqref{eq:auto_dyn} and a $r^{th}$ order differentiable function $h:\mathbb{R}^n \to \mathbb{R}$. If $h$ is an HOBF defined on the open set $\myset{D} $ with $ \myset{C}:= \bigcap_{k = 1}^{r} \myset{C}_{\psi_{k-1}} \subset \myset{D}\subset \mathbb{R}^n$, then $\myset{C}$ is forward invariant. 
\end{proposition}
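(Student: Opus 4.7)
My plan is to reduce the statement to $r$ successive applications of a scalar comparison-lemma argument, peeling from the innermost inequality $\psi_r\ge 0$ outward to $\psi_0 = h\ge 0$. Fix an initial condition $\myvar{x}_0 \in \myset{C}$, so that $\psi_k(\myvar{x}_0) \ge 0$ for every $k = 0,1,\ldots,r-1$. Because $\myset{D}$ is open and $\myset{C} \subset \myset{D}$, the trajectory $\myvar{x}(t,\myvar{x}_0)$ of \eqref{eq:auto_dyn} stays in $\myset{D}$ on some nontrivial initial interval; I would introduce the exit time $T := \sup\{ t \in I(\myvar{x}_0) : \myvar{x}(\tau,\myvar{x}_0) \in \myset{D}\ \text{for all}\ \tau \in [t_0,t] \}$, and carry out the analysis on $[t_0,T)$, where the HOBF condition $\psi_r(\myvar{x}(t))\ge 0$ is available along the solution.

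The cascade step I would iterate is the following. For each $k = r, r-1, \ldots, 1$, the inequality $\psi_k(\myvar{x}(t)) \ge 0$ on $[t_0,T)$ rewrites, by the very definition \eqref{eq:a_series_psi}, as $\dot\psi_{k-1} \ge -\alpha_k(\psi_{k-1})$ along the trajectory. Since $\alpha_k$ is a differentiable extended class $\mathcal{K}$ function with $\alpha_k(0)=0$, the scalar ODE $\dot z = -\alpha_k(z)$ with $z(t_0)=0$ admits $z \equiv 0$ as its unique solution; applying the standard scalar comparison lemma to $y(t) := \psi_{k-1}(\myvar{x}(t))$ with $y(t_0) \ge 0$ then yields $\psi_{k-1}(\myvar{x}(t)) \ge 0$ throughout $[t_0,T)$. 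Running this from $k=r$ down to $k=1$ produces $\psi_k(\myvar{x}(t)) \ge 0$ for all $0 \le k \le r-1$ and all $t \in [t_0,T)$, so $\myvar{x}(t) \in \myset{C}$ on that interval.

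Finally, I would upgrade $[t_0,T)$ to the full maximal existence interval $I(\myvar{x}_0)$. Each $\psi_k$ is continuous, hence $\myset{C}$ is closed as an intersection of closed sets; if $T < \sup I(\myvar{x}_0)$, continuity forces $\myvar{x}(T,\myvar{x}_0) \in \myset{C} \subset \myset{D}$, and openness of $\myset{D}$ together with continuity of the trajectory would extend the interval on which $\myvar{x}(\cdot) \in \myset{D}$, contradicting the definition of $T$; hence $T = \sup I(\myvar{x}_0)$ and $\myset{C}$ is forward invariant. The main technical point I expect to need to watch is the comparison step itself: I must ensure that $\psi_{k-1}(\myvar{x}(t))$ is absolutely continuous along trajectories (granted by $h$ being $r$-times differentiable and $\myvarfrak{f}$ locally Lipschitz) and that the zero solution of $\dot z = -\alpha_k(z)$ is unique, which is exactly why Definition~\ref{def:hobf} requires $\alpha_k$ to be differentiable rather than merely continuous.
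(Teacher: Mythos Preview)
Your argument is correct and takes a genuinely different route from the paper. The paper argues pointwise and geometrically: for every $\myvar{x}\in\myset{C}$ it observes $\dot\psi_{k-1}\ge -\alpha_k(\psi_{k-1})$, so on each active face $\partial\myset{C}_{\psi_{k-1}}\cap\myset{C}$ the vector field satisfies $\tfrac{\partial\psi_{k-1}}{\partial\myvar{x}}\myvarfrak{f}(\myvar{x})\ge 0$ and hence lies in the tangent cone $\mathcal{T}_{\myset{C}_{\psi_{k-1}}}(\myvar{x})$; collecting active constraints gives $\myvarfrak{f}(\myvar{x})\in\mathcal{T}_{\myset{C}}(\myvar{x})$, and Brezis's Theorem then yields forward invariance directly, with no forward-completeness hypothesis. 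You instead fix a trajectory, introduce the exit time from $\myset{D}$, peel the inequalities $\psi_r\ge 0\Rightarrow\psi_{r-1}\ge 0\Rightarrow\cdots\Rightarrow\psi_0\ge 0$ via $r$ scalar comparison lemmas, and then close with an open--closed argument to extend to the full maximal interval. Your approach is more elementary (no tangent-cone machinery and no need to argue that $\mathcal{T}_{\myset{C}}$ is the intersection of the $\mathcal{T}_{\myset{C}_{\psi_{k-1}}}$), makes the cascade structure explicit, and in fact anticipates the paper's own later analysis: the asymptotic-stability proof of Proposition~\ref{prop:AS_hobf} is built on exactly this kind of comparison with the auxiliary system \eqref{eq:auxiliary_sys}. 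The paper's route is more concise and sidesteps the need to discuss uniqueness of the scalar comparison ODE; your exit-time bookkeeping, on the other hand, makes transparent why forward completeness is unnecessary, which the paper only explains in a remark after invoking Brezis. As a minor aside, the uniqueness of the zero solution of $\dot z=-\alpha_k(z)$ actually follows already from $\alpha_k$ being extended class~$\mathcal{K}$ (the flow is strictly inward toward $0$), so differentiability, while sufficient, is stronger than what your comparison step truly needs.
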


\begin{proof}
For all $\myvar{x} \in \myset{C} \subset\myset{D}$, $\psi_r(\myvar{x}) \ge 0$, we obtain
\begin{equation*} \label{eq:nagumo_vector_dirc}
    \begin{aligned}
    \tfrac{\partial \psi_{k-1}}{\partial \myvar{x}} \myvarfrak{f} (\myvar{x})  & = \tfrac{d}{dt} \psi_{k-1}(\myvar{x}) = - \alpha_{k}(\psi_{k-1}(\myvar{x})) + \psi_{k}(\myvar{x}) \\ &\ge  - \alpha_{k}(\psi_{k-1}(\myvar{x})), 
 \quad   1 \le k \le r, \forall \myvar{x} \in \myset{C}.
    \end{aligned}
\end{equation*} 
We thus have
\begin{equation*}
\tfrac{\partial \psi_{k-1}}{\partial \myvar{x}} \myvarfrak{f} (\myvar{x})  \ge 0, \quad \forall \myvar{x} \in \partial \myset{C}_{\psi_{k-1}} \cap \myset{C} \subset \myset{C},
\end{equation*}
Thus, by definition of the tangent cone,
\begin{equation*} \label{eq:sub_tang_condition}
\myvarfrak{f} (\myvar{x}) \in \mathcal{T}_{\myset{C}_{\psi_{k-1}}}(\myvar{x}), \quad \forall \myvar{x} \in \partial \myset{C}_{\psi_{k-1}} \cap \myset{C} \subset \myset{C}.
\end{equation*}
{Let} $\text{Act}(\myvar{x})$ {denote} the set of active constraints{, i.e., }$\text{Act}(\myvar{x}) = \{k : \psi_{k-1}(\myvar{x}) = 0\}$. Thus, for $\myvar{x} \in \partial  \myset{C}$, the following holds
\begin{equation*}
    \myvarfrak{f} (\myvar{x}) \in \mathcal{T}_{\myset{C}_{\psi_{k-1}}}(\myvar{x}), \quad k\in \text{Act}(\myvar{x})
\end{equation*}
{This implies that} $\myvarfrak{f} (\myvar{x}) \in \mathcal{T}_{\myset{C}}(\myvar{x})$ for all $\myvar{x}\in \myset{C}$. {Since} $\mathfrak{f}$ is  locally Lipschitz,  {the application of} Brezis's Theorem \cite[Theorem~4]{redheffer1972theorems} {ensures that} the set $\myset{C}$ is forward invariant.
\end{proof}

\begin{remark}
 Nagumo's Theorem \cite[Theorem~4.7]{Blanchini2015} has been applied in the barrier function community to {guarantee} forward invariance. However, we need to point out that Nagumo's theorem cannot be applied in the previous proof because, to guarantee  forward invariance, it requires forward completeness of the system \eqref{eq:auto_dyn}, which is not assumed in our case. Instead,  Brezis's theorem {dictates} that with a locally Lipschitz continuous vector field $\mathfrak{f}$ and a closed set $\myset{A} $,  $ \mathfrak{f} (\myvar{x})\in \mathcal{T}_{\myset{A}}$ for all $\myvar{x} \in \myset{A}$ implies that $\myset{A} $ is forward invariant up to the maximal time interval. If we further assume the set $\myset{A}$ {is} compact, {then} the solution remains in $\myset{A}$ for all $t\ge t_0$.

\end{remark}

Definition \ref{def:hobf} and Proposition \ref{prop:hobf} are {generalizations} of similar concepts proposed in \cite{xu2018constrained} and \cite{xiao2019control}. In \cite{xu2018constrained}, each $ \alpha_{k} $ is restricted to the class of linear functions, i.e., $ \alpha_{k}(v)= a_k v, a_k >0, 1\le k \le r $, {whereas our results hold for any extended class-$\mathcal{K}$ function}. {In} \cite{xiao2019control}, {the HOBFs are not well-defined outside of their safe sets due to the restriction to class-$\mathcal{K}$ functions. Here} we let each $ \alpha_{k} $ be an extended class $\mathcal{K}$ function, which is well-defined even if $\psi_{k-1}(\myvar{x}) <0, 1\le k \le r$. {This is important to address robustness as will be shown in the following section.}

\subsection{ Asymptotic stability of the set $\myset{C}$}
Here, we assume {the} system \eqref{eq:auto_dyn} is forward complete to comply with the conditions for asymptotic stability to a set. Before addressing asymptotic stability, we first recall a generalized comparison lemma from \cite{lakshmikantham1989stability}. The {vector} inequalities {used here} are {to be} interpreted component-wise.

\begin{definition}
	A function $ \myvar{p}: \mathbb{R}_{+}  \times \mathbb{R}^n \to \mathbb{R}^n $ is called \textit{quasimonotone nondecreasing} if,  for $ 1 \leq i \leq n$ and all $ 1 \leq j \leq n,  j\neq i $, $x_i = y_i, x_j \le y_j $  implies that
	\begin{equation}
	   p_i(t, \myvar{x}) \le p_i(t, \myvar{y}) 
	\end{equation}
	for the $ i $th component of $ \myvar{p} (t,\cdot)$ and for each $ t $.  

\end{definition}

To {understand} this definition, we present a simple example. {Suppose} $ \myvar{p}(t, \myvar{v}) = A\myvar{v} $. {If} $ \myvar{p} $ {is} quasimonotone nondecreasing, {then} all the off-diagonal elements in $ A $ {must be} nonnegative. {Also} one {can} verify that, if $ \myvar{p} $ is quasimonotone nondecreasing, then  $ \myvar{y} - \myvar{x} \succeq \myvar{0}$, $ \myvar{a}^\top (\myvar{y}-\myvar{x}) = 0 $ for some nonzero vector $ \myvar{a} \in \mathbb{R}^n_{+} $ implies that $ \myvar{a}^\top (\myvar{p}(t,\myvar{y})- \myvar{p}(t,\myvar{x})) \ge 0 $.

\begin{lemma}\cite[Modified from Theorem~1.5.4]{lakshmikantham1989stability} \label{lem:generalized comparison lemma}
	Consider the vectorial differential system 
	\begin{equation}\label{eq:general_sys}
	\tfrac{d}{dt}\myvar{v} = \myvar{p}(t,\myvar{v}), \myvar{v}(t_0) = \myvar{v}_0
	\end{equation}
	where $ \myvar{p}: \mathbb{R}_{+}  \times \mathbb{R}^n \to \mathbb{R}^n $ is quasimonotone nondecreasing and let $ \myvar{r}(t) $ be the maximal solution existing on $ [t_0, \infty) $. Suppose that a continuous function $ \myvar{m}\in C[\mathbb{R}_{+}, \mathbb{R}^n] $ satisfies, for some fixed Dini derivative\footnote{ For a continuous vectorial function $\myvar{m}:\mathbb{R}\to \mathbb{R}^n$, four forms of Dini derivatives of $\myvar{m}$ at $t$  are defined as follows: $D^{+}\myvar{m}(t) = \limsup_{h \to 0^{+}} ( \myvar{m}(t+h) - \myvar{m}(t))/h, D^{-}\myvar{m}(t) = \limsup_{h \to 0^{-}} (\myvar{m}(t+h) - \myvar{m}(t))/h, D_{+}\myvar{m}(t) = \liminf_{h \to 0^{+}} (\myvar{m}(t+h) - \myvar{m}(t))/h,  D_{-}\myvar{m}(t) = \liminf_{h \to 0^{-}} ( \myvar{m}(t+h) - \myvar{m}(t))/h $.   },
	\begin{equation}\label{eq:vectorial_diff_inequality}
	D\myvar{m}(t) \preceq \myvar{p}(t,\myvar{m}(t)), \quad  t\in [t_0, \infty) .
	\end{equation}
	Then, $ \myvar{m}(t_0) \preceq \myvar{v}_0 $ implies 
	\begin{equation}
	\myvar{m}(t) \preceq \myvar{r}(t), \quad  t\in [t_0, \infty) .
	\end{equation}
\end{lemma}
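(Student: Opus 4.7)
The plan is to establish the inequality $\myvar{m}(t) \preceq \myvar{r}(t)$ via a strict-inequality perturbation argument and then pass to the limit. The classical technique is: first prove that a slightly inflated solution of the ODE dominates $\myvar{m}$ strictly, then use continuous dependence to recover $\myvar{r}$ as the perturbation parameter vanishes.

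Concretely, for each $\epsilon > 0$ I would introduce the perturbed system
\begin{equation*}
\tfrac{d}{dt}\myvar{v}^{\epsilon} = \myvar{p}(t,\myvar{v}^{\epsilon}) + \epsilon\, \myvar{e}, \qquad \myvar{v}^{\epsilon}(t_0) = \myvar{v}_0 + \epsilon\, \myvar{e},
\end{equation*}
where $\myvar{e} = (1,\dots,1)^\top$, and consider any compact sub-interval $[t_0, T] \subset [t_0, \infty)$. Since $\myvar{r}$ exists on $[t_0,\infty)$, standard continuous dependence on initial conditions and on a constant forcing term implies that, for all sufficiently small $\epsilon$, $\myvar{v}^{\epsilon}$ exists on $[t_0, T]$ and $\myvar{v}^{\epsilon}(t) \to \myvar{r}(t)$ uniformly on $[t_0, T]$ as $\epsilon \to 0^{+}$. (Identifying the limit with the maximal solution $\myvar{r}$, rather than just some solution, is a point I would verify using that $\myvar{v}^{\epsilon}(t) \succeq \myvar{r}(t)$ can be arranged by monotonicity.)

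The core step is to show $\myvar{m}(t) \prec \myvar{v}^{\epsilon}(t)$ on $[t_0, T]$. Since $\myvar{m}(t_0) \preceq \myvar{v}_0 \prec \myvar{v}_0 + \epsilon\, \myvar{e} = \myvar{v}^{\epsilon}(t_0)$, continuity gives strict inequality on some right-neighborhood of $t_0$. Suppose for contradiction that the set where strict inequality fails is nonempty, and let $t_1$ be its infimum. Then $\myvar{m}(t_1) \preceq \myvar{v}^{\epsilon}(t_1)$ with $m_i(t_1) = v_i^{\epsilon}(t_1)$ for some index $i$, and $m_j(t) < v_j^{\epsilon}(t)$ for $j = 1,\dots,n$ and $t \in [t_0, t_1)$. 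The vector $\myvar{v}^{\epsilon}(t_1) - \myvar{m}(t_1)$ lies in $\mathbb{R}^n_+$ with a zero in component $i$; by the quasimonotonicity remark from the excerpt, this yields $p_i(t_1,\myvar{m}(t_1)) \le p_i(t_1,\myvar{v}^{\epsilon}(t_1))$. Combining with the Dini hypothesis $D m_i(t_1) \le p_i(t_1,\myvar{m}(t_1))$ gives
\begin{equation*}
D m_i(t_1) \le p_i(t_1,\myvar{v}^{\epsilon}(t_1)) = \dot v_i^{\epsilon}(t_1) - \epsilon < \dot v_i^{\epsilon}(t_1).
\end{equation*}
On the other hand, because $v_i^{\epsilon} - m_i$ is strictly positive on $[t_0, t_1)$ and vanishes at $t_1$, a standard incremental-ratio estimate on $[t_1 - k, t_1]$ yields the opposite inequality $D m_i(t_1) \ge \dot v_i^{\epsilon}(t_1)$ for any of the four Dini derivatives, contradicting the previous display. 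Hence $\myvar{m}(t) \prec \myvar{v}^{\epsilon}(t)$ on $[t_0, T]$.

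Finally, letting $\epsilon \to 0^{+}$ yields $\myvar{m}(t) \preceq \myvar{r}(t)$ for every $t \in [t_0, T]$, and since $T$ was arbitrary, the conclusion holds on $[t_0, \infty)$. The main obstacle is the delicate derivative comparison at the contact time $t_1$: the hypothesis only provides a single \emph{fixed} Dini derivative, so one must verify that the touching-from-below behavior of $v_i^{\epsilon} - m_i$ translates into the correct-sided inequality for that particular Dini derivative — which can be done in each of the four cases, but requires some care. A secondary technical point is rigorously identifying the uniform limit of $\myvar{v}^{\epsilon}$ with the \emph{maximal} solution $\myvar{r}$, which I would argue by monotonicity in $\epsilon$ together with the standard construction of the maximal solution as the envelope of perturbed solutions.
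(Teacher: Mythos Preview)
Your approach is essentially the paper's: perturb the ODE by $+\epsilon\myvar{e}$, prove strict domination $\myvar{m}(t)\prec\myvar{v}^{\epsilon}(t)$ via a first-contact-time contradiction exploiting quasimonotonicity, then let $\epsilon\to 0$ using uniform convergence to the maximal solution on compacts. For the latter step the paper simply cites \cite[Theorem~1.5.1]{lakshmikantham1989stability}, which also disposes of your ``secondary technical point'' about identifying the limit with $\myvar{r}$. The paper's contact-time argument packages the active indices into a nonnegative vector $\myvar{a}$ and works with the scalar $w(t)=\myvar{a}^{\top}(\myvar{v}^{\epsilon}(t)-\myvar{m}(t))$, but this is equivalent to your single-component version.

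The one place the paper is sharper is exactly the point you flag as the main obstacle. Rather than treating each of the four Dini derivatives separately, the paper invokes \cite[Lemma~1.5.1]{lakshmikantham1989stability} to reduce any fixed Dini-derivative hypothesis $D\myvar{m}\preceq\myvar{p}(t,\myvar{m})$ to $D_{-}\myvar{m}\preceq\myvar{p}(t,\myvar{m})$; once you are in the $D_{-}$ case, your left-sided incremental-ratio argument on $[t_1-k,t_1]$ is exactly correct and gives $D_{-}m_i(t_1)\ge\dot v_i^{\epsilon}(t_1)$, yielding the contradiction. Your assertion that the estimate on $[t_1-k,t_1]$ delivers $Dm_i(t_1)\ge\dot v_i^{\epsilon}(t_1)$ for \emph{all four} Dini derivatives is not correct as written: that interval only controls the left derivatives $D^{-}$ and $D_{-}$. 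If the hypothesis were on $D^{+}$ or $D_{+}$ you would either need a forward-in-time touching argument (which works for $D^{+}$ but is delicate for $D_{+}$) or, as the paper does, the reduction lemma.
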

 
  The difference between this Lemma and Theorem 1.5.4 {of} \cite{lakshmikantham1989stability} is that we do not need the domain of $ \myvar{p} $ to be $ \mathbb{R}_{+} \times \mathbb{R}^n_{+} $, nor {do we require} $\myvar{m}(t_0), \myvar{v}_0 $ to be in $  \mathbb{R}^n_{+} $. The proof is almost identical and presented here for completeness.

 \begin{proof}
 	We first introduce an auxiliary system. From \cite[ Theorem~1.5.1]{lakshmikantham1989stability}, we know that for any compact interval $ [t_0,T] $, there exists an $ \myvar{\epsilon}_0 \succ \myvar{0} $ such that for constant vector $ \myvar{\epsilon}, \myvar{0} \prec \myvar{\epsilon}\prec \myvar{\epsilon}_0 $, solutions $ \myvar{v}(t,\myvar{\epsilon}) $ of $ 	\tfrac{d}{dt}\myvar{v} = \myvar{p}(t,\myvar{v}) + \myvar{\epsilon}, \myvar{v}(t_0) = \myvar{v}_0 + \myvar{\epsilon}$
 	exist on $ [t_0, T] $ and $ \underset{\myvar{\epsilon}\to \myvar{0}}{\lim}   \myvar{v}(t,\myvar{\epsilon}) = \myvar{r}(t) $ uniformly on $ [t_0, T] $. From \cite[ Lemma~1.5.1]{lakshmikantham1989stability} and the condition \eqref{eq:vectorial_diff_inequality}, we know that $ D_{-}\myvar{m}(t) \preceq \myvar{p}(t,\myvar{m}(t)), t \geq t_0 $, where $ D_{-}\myvar{m}(t) = \underset{h\to 0^{-}}{\liminf} (\myvar{m}(t+h) - \myvar{m}(t))/h $.
 	
 	It is enough to show that, for arbitrary compact interval $ [t_0,T]$  and sufficiently small $ \myvar{\epsilon}\succ \myvar{0} $, 
 	\begin{equation}\label{eq:intermediate_step}
 	\myvar{m}(t) \prec \myvar{v}(t,\myvar{\epsilon}), \quad t\in [t_0,T]
 	\end{equation}
 	If \eqref{eq:intermediate_step} is not true for some time instant, since  $  \myvar{m}(t_0) \preceq \myvar{v}_0 \prec \myvar{v}_0 + \myvar{\epsilon}= \myvar{v}(t_0,\myvar{\epsilon}) $ and the continuity of $ \myvar{m}(t), \myvar{v}(t,\myvar{\epsilon}) $, there exists a $ t_1 \in [t_0, T]$ such that,  $  \myvar{v}(t,\myvar{\epsilon}) - \myvar{m}(t) \succ \myvar{0}  \quad \text{for all } t \in [t_0,t_1)$
 	and
 	\begin{equation} \label{eq:hit_the_boundaryRn}
 	    \myvar{v}(t_1,\myvar{\epsilon}) - \myvar{m}(t_1) \in \partial \mathbb{R}^n_{+}   .
 	\end{equation}
 	\eqref{eq:hit_the_boundaryRn} means $\myvar{v}(t_1,\myvar{\epsilon}) - \myvar{m}(t_1)$ is at the boundary of $\mathbb{R}^n_{+}$, hence  a nonzero vector $ \myvar{a} \in \mathbb{R}^{n}_{+}$ exists such that $ \myvar{a}^\top (\myvar{v}(t_1,\myvar{\epsilon}) - \myvar{m}(t_1))=0. $ Employing the quasimonotone nondecreasing property of $ \myvar{p} $, it now yields $\myvar{a}^\top (\myvar{p}(t_1,\myvar{v}(t_1,\myvar{\epsilon})) - \myvar{p}(t_1,\myvar{m}(t_1))) \ge 0.$
 	Let $ w(t) = \myvar{a}^\top (\myvar{v}(t,\myvar{\epsilon}) - \myvar{m}(t)), t\in [t_0, t_1) $. Since $ h <0, w(t_1+h) >0 $ (as a result of $ \myvar{v}(t,\myvar{\epsilon}) - \myvar{m}(t) \in Int(\mathbb{R}^{n}_{+}) $ for $ t \in [t_0,t_1) $) and $ w(t_1) = 0 $, we obtain $ 	D_{-} w(t_1) = \underset{h\to 0^{-}}{\liminf} ( w(t_1+h) - w(t_1))/h \le 0$

 	However, from the quasimonotone nondecreasing property, we get $D_{-}w(t_1)  = \myvar{a}^\top ( D_{-} \myvar{v}(t_1,\myvar{\epsilon}) - D_{-}\myvar{m}(t_1)) 
 	     = \myvar{a}^\top ( \tfrac{d}{dt} \myvar{v}(t,\myvar{\epsilon})|_{t = t_1} - D_{-}\myvar{m}(t_1))  = \myvar{a}^\top ( \myvar{p}(t_1, \myvar{v}(t_1,\myvar{\epsilon})) + \myvar{\epsilon}- D_{-}\myvar{m}(t_1)) 	    	>\myvar{a}^\top ( \myvar{p}(t_1, \myvar{v}(t_1,\myvar{\epsilon}))  - D_{-}\myvar{m}(t_1)) \ge 0,$
 	which is contradiction. Hence the proof is complete.
 \end{proof}

Now we proceed to our analysis of the high-order terms in \eqref{eq:a_series_psi}. First we note that for a given set of $\alpha_k$ functions, each $\psi_{k-1}$ is governed by the system dynamics \eqref{eq:auto_dyn}. We can however rearrange these equations as follows: 

\begin{equation} \label{eq: rerrange_psi}
\begin{bmatrix}
\dot{\psi}_0\\
\dot{\psi}_1 \\
... \\
\dot{\psi}_{r-1}
\end{bmatrix} = \begin{bmatrix}
-\alpha_1 (\psi_0) + \psi_1\\
-\alpha_2 (\psi_1) + \psi_2 \\
... \\
-\alpha_r (\psi_{r-1}) + \psi_{r}
\end{bmatrix} 
\end{equation}

Interpreting \eqref{eq: rerrange_psi} as a nonautonomous  system with state variable $ \myvar{\psi} = (\psi_0, \psi_1, \cdots,\psi_{r-1})^\top $ and the time-varying term $ \psi_{r}(\myvar{x}(t)) $, we can re-write \eqref{eq: rerrange_psi} as
 \begin{equation}\label{eq:psi_dyn}
 \tfrac{d}{dt}\myvar{\psi} := \myvar{p}(t,\myvar{\psi}), \quad \myvar{\psi}(t_0) := \myvar{v}_0.
 \end{equation}
 
  A key observation is that the function $ \myvar{p}:\mathbb{R}_{+} \times \mathbb{R}^{r} \to \mathbb{R}^{r}  $ is quasimonotone nondecreasing. This can be {seen} from the fact that, for any $i = 1,2,\cdots, r-1$,  $p_i(t,\myvar{\psi})$, the $i$th component of $\myvar{p}(t,\myvar{\psi})$,  only contains two terms and is increasing with respect to $\psi_i$, the $(i+1)$th component of the vector $\myvar{\psi}$; for $i = r$, $p_r(t,\myvar{\psi}) $ only contains $\psi_{r-1}$, the $r$th component of the vector $\myvar{\psi}$.    A direct application of Lemma \ref{lem:generalized comparison lemma} yields:
  
 \begin{proposition} \label{prop:lower_bounded_system}
 	Let $ \myvar{m} \in C^{1}(\mathbb{R}_{+}, \mathbb{R}^r) $, {and let} $ \myvar{\psi}(t) $ be the solution of \eqref{eq:psi_dyn}. Then  
 	\begin{equation}
 	\tfrac{d}{dt} \myvar{m}(t) \preceq \myvar{p}(t,\myvar{m}) \text{  for  } t\ge t_0, \text{  and  } \myvar{m}(t_0)\preceq \myvar{v}_0
 	\end{equation}
 	implies that
 	\begin{equation}\label{eq:comparison_m_psi}
 	\myvar{m}(t) \preceq \myvar{\psi}(t)  \text{  for  } t\ge t_0.
 	\end{equation}
  
 \end{proposition}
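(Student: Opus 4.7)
The plan is to apply Lemma \ref{lem:generalized comparison lemma} directly, with the role of the reference system \eqref{eq:general_sys} played by \eqref{eq:psi_dyn} and the role of $\myvar{m}$ played by the curve in the hypothesis. Three items then need to be checked: (i) the right-hand side $\myvar{p}(t,\myvar{\psi})$ of \eqref{eq:psi_dyn} is quasimonotone nondecreasing; (ii) the differential inequality hypothesis of the comparison lemma holds for some Dini derivative; and (iii) the solution $\myvar{\psi}(t)$ named in the statement is actually the maximal solution $\myvar{r}(t)$ supplied by that lemma.

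For (i), I would simply refer back to the paragraph preceding the proposition. Inspecting the rows of \eqref{eq: rerrange_psi} componentwise, $p_i$ depends only on $\psi_{i-1}$ and $\psi_i$, with the off-diagonal dependence on $\psi_i$ being strictly increasing for $i \le r-1$, while $p_r$ depends only on $\psi_{r-1}$. This is precisely the quasimonotone nondecreasing property. For (ii), since $\myvar{m} \in C^1(\mathbb{R}_+, \mathbb{R}^r)$ all four Dini derivatives coincide with the classical derivative, so the assumed inequality $\tfrac{d}{dt} \myvar{m}(t) \preceq \myvar{p}(t,\myvar{m}(t))$ is exactly the condition \eqref{eq:vectorial_diff_inequality} required by the lemma.

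For (iii), I would exploit the fact that the $\alpha_k$ are differentiable extended class $\mathcal{K}$ functions (hence locally Lipschitz in their arguments), and that $\psi_r(\myvar{x}(t))$ is a continuous function of $t$ by the forward completeness hypothesis standing in this subsection together with the local Lipschitz continuity of $\myvarfrak{f}$. Hence $\myvar{p}(t,\myvar{\psi})$ is continuous in $t$ and locally Lipschitz in $\myvar{\psi}$, so the initial value problem \eqref{eq:psi_dyn} admits a unique forward solution from $\myvar{v}_0$. This forces $\myvar{r}(t) = \myvar{\psi}(t)$ on $[t_0,\infty)$, and Lemma \ref{lem:generalized comparison lemma} then yields $\myvar{m}(t) \preceq \myvar{r}(t) = \myvar{\psi}(t)$ for $t \ge t_0$, which is \eqref{eq:comparison_m_psi}.

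The argument is essentially a verification of hypotheses; the only non-cosmetic step is the identification of $\myvar{r}$ with $\myvar{\psi}$, and even that reduces to the standard Picard-type uniqueness argument once the smoothness of the $\alpha_k$ and the continuity of the forcing term $\psi_r(\myvar{x}(t))$ are in hand. Consequently I do not foresee any genuine obstacle in the proof beyond carefully recording these regularity observations.
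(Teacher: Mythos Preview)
Your proposal is correct and follows essentially the same approach as the paper: verify that $\myvar{p}$ is quasimonotone nondecreasing (already noted just before the proposition), observe that $\myvar{\psi}(t)$ exists on $[t_0,\infty)$ by forward completeness, and invoke Lemma~\ref{lem:generalized comparison lemma}. Your treatment is in fact slightly more careful than the paper's, which does not explicitly argue that $\myvar{\psi}$ coincides with the maximal solution $\myvar{r}$; your uniqueness argument via local Lipschitz continuity of $\myvar{p}$ in $\myvar{\psi}$ fills that small gap.
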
 
 \begin{proof}
  Since $\myvar{p}:\mathbb{R}_{+} \times \mathbb{R}^{r} \to \mathbb{R}^{r} $ is quasimonotone nondecreasing and $ \myvar{\psi}(t) $ exists for $ t\in  [0,\infty) $ (as the system \eqref{eq:auto_dyn} is forward complete),  \eqref{eq:comparison_m_psi} follows directly from Lemma \ref{lem:generalized comparison lemma}.
 \end{proof}
  
  We next introduce an auxiliary system 
 \begin{equation}\label{eq:auxiliary_sys}
 \begin{bmatrix}
 \dot{m}_0\\
 \dot{m}_1 \\
 ...\\
 \dot{m}_{r-1}
 \end{bmatrix} = \begin{bmatrix}
 -\alpha_1 (m_0) + m_1\\
 -\alpha_2 (m_1) + m_2 \\
 ...\\
 -\alpha_r (m_{r-1})
 \end{bmatrix}, \quad \myvar{m}(t_0) := \myvar{v}_0.
 \end{equation}
 with the system state $ \myvar{m} = (m_0, m_1, \cdots, m_{r-1})^\top$. Note for $ h $ to be a HOBF, we require $ \psi_r( \myvar{x}) \ge 0$. Thus, the solution of the auxiliary system $\myvar{m}(t)$ satisfies the conditions in Proposition \ref{prop:lower_bounded_system}, and $\myvar{m}(t) \preceq \myvar{\psi}(t)$ for all $t\ge t_0$.
 
  \begin{proposition} \label{prop:AS_hobf}
    If $h$ is an HOBF for the system \eqref{eq:auto_dyn}  and the set $\myset{C}:=\bigcap_{k = 1}^{r} \myset{C}_{\psi_{k-1}}$ is compact, then the set  $\myset{C}$ is asymptotically stable.
  \end{proposition}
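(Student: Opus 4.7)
The plan is to combine the comparison principle from Proposition \ref{prop:lower_bounded_system} with a cascade analysis of the auxiliary system \eqref{eq:auxiliary_sys}. First I would show that $\myvar{m}=\myvar{0}$ is a globally asymptotically stable equilibrium of \eqref{eq:auxiliary_sys}. The system has a lower-triangular structure: the bottom scalar equation $\dot{m}_{r-1} = -\alpha_r(m_{r-1})$ has $0$ as a globally asymptotically stable equilibrium since $\alpha_r$ is strictly increasing with $\alpha_r(0)=0$. Each preceding equation $\dot{m}_{k} = -\alpha_{k+1}(m_{k}) + m_{k+1}$ is input-to-state stable with respect to the driving signal $m_{k+1}$, again because $\alpha_{k+1}$ is an extended class $\mathcal{K}$ function. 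A routine cascade argument then yields $\myvar{m}(t)\to\myvar{0}$ for every initial condition.

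Next I would use this to establish attractivity of $\myset{C}$. Since $h$ is an HOBF and forward invariance of $\myset{C}$ holds by Proposition \ref{prop:hobf}, along any trajectory starting in $\myset{C}$ we have $\psi_r(\myvar{x}(t))\ge 0$, which means $\tfrac{d}{dt}\myvar{\psi}\succeq \tilde{\myvar{p}}(\myvar{\psi})$, where $\tilde{\myvar{p}}$ is the autonomous right-hand side of \eqref{eq:auxiliary_sys}. Applying Proposition \ref{prop:lower_bounded_system} with $\myvar{m}(t_0)=\myvar{\psi}(t_0)$ produces the componentwise bound $\myvar{m}(t)\preceq \myvar{\psi}(t)$ for all $t\ge t_0$. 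Combined with $\myvar{m}(t)\to\myvar{0}$, this yields $\liminf_{t\to\infty}\psi_k(\myvar{x}(t))\ge 0$ for each $k$. To convert this barrier-coordinate convergence into state-space convergence I would introduce the enlargements $\myset{C}_\eta := \{\myvar{x}\in\myset{D}:\psi_k(\myvar{x})\ge -\eta,\ 0\le k\le r-1\}$; the compactness assumption on $\myset{C}$ together with continuity of the $\psi_k$'s guarantees that $\myset{C}_\eta$ shrinks to $\myset{C}$ in the Hausdorff sense as $\eta\downarrow 0$, which gives $\| \myvar{x}(t) \|_{\myset{C}} \to 0$.

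For the stability half, I would perform a dual comparison: choosing the initial condition of the auxiliary system slightly above $\myvar{\psi}(t_0)$ and using quasimonotonicity in the reverse direction yields an upper envelope for $\myvar{\psi}(t)$. Continuous dependence on initial conditions for the auxiliary cascade at $\myvar{m}=\myvar{0}$ then delivers the standard $\varepsilon$--$\delta$ statement: for every $\varepsilon>0$ there exists $\delta>0$ such that $\| \myvar{x}(t_0) \|_{\myset{C}}<\delta$ implies $\| \myvar{x}(t) \|_{\myset{C}}<\varepsilon$ for all $t\ge t_0$.

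The main obstacle I expect is the geometric bridge between convergence in the barrier coordinates $\myvar{\psi}$ and convergence in the state-space metric $\|\cdot\|_{\myset{C}}$: this step genuinely relies on compactness of $\myset{C}$, since without it the enlargements $\myset{C}_\eta$ could contain points arbitrarily far from $\myset{C}$ and the cascade attractivity in $\mathbb{R}^r$ would not translate into attractivity in $\mathbb{R}^n$. A secondary technical point is that the inequality $\psi_r(\myvar{x}(t))\ge 0$ is only known inside $\myset{D}$, so one must first confine the trajectory to a compact neighborhood of $\myset{C}$ contained in $\myset{D}$ via the stability argument before iterating the comparison estimate on the infinite horizon.
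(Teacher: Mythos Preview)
Your attractivity argument and the cascade analysis of the auxiliary system \eqref{eq:auxiliary_sys} match the paper's approach: both establish global asymptotic stability of $\myvar{m}=\myvar{0}$ by induction on the lower-triangular structure, invoke Proposition~\ref{prop:lower_bounded_system} to obtain $\myvar{m}(t)\preceq\myvar{\psi}(t)$, and then use compactness of $\myset{C}$ to translate convergence in barrier coordinates into convergence of $\|\myvar{x}(t)\|_{\myset{C}}$.

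The stability half, however, has a genuine gap. Your proposed ``dual comparison'' producing an upper envelope for $\myvar{\psi}$ does not exist: along trajectories one has $\tfrac{d}{dt}\myvar{\psi}=\tilde{\myvar{p}}(\myvar{\psi})+(0,\dots,0,\psi_r)^\top$ with $\psi_r\ge 0$, so $\myvar{\psi}$ is a \emph{supersolution} of the auxiliary system, and quasimonotonicity in either direction only delivers the \emph{lower} bound $\myvar{m}(t)\preceq\myvar{\psi}(t)$. Since $\psi_r$ has no known upper bound, no upper envelope on $\myvar{\psi}$ is available. Your appeal to ``continuous dependence at $\myvar{m}=\myvar{0}$'' is also misplaced: an initial state with $\|\myvar{x}(t_0)\|_{\myset{C}}$ small only forces the \emph{negative} parts of $\myvar{\psi}(t_0)$ to be small; its positive components can be arbitrarily large, so $\myvar{\psi}(t_0)$ (and hence $\myvar{m}(t_0)=\myvar{\psi}(t_0)$) is in general nowhere near the origin.

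What actually works---and what the paper does---is to observe that stability of $\myset{C}$ only requires controlling $\|\myvar{\psi}(t)\|_{\mathbb{R}_+^r}$, i.e.\ the negative part of $\myvar{\psi}(t)$, and for this the lower bound already suffices. The missing ingredient is a direct inductive barrier argument on the cascade: if $\psi_{k-1}(t)\ge-\epsilon'$ for all $t\ge t_0$, then from $\dot\psi_{k-2}=-\alpha_{k-1}(\psi_{k-2})+\psi_{k-1}$ one gets $\dot\psi_{k-2}\ge 0$ whenever $\psi_{k-2}=-\alpha_{k-1}^{-1}(\epsilon')$, hence $\psi_{k-2}(t)\ge-\alpha_{k-1}^{-1}(\epsilon')$ provided $\psi_{k-2}(t_0)\ge-\alpha_{k-1}^{-1}(\epsilon')$. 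This propagates the smallness of the negative part from $\psi_{r-1}$ up to $\psi_0$ regardless of how large the positive components are, and it is precisely this step that your proposal does not supply.
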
 
  \begin{proof}
 We first show the following claims.
  
 \noindent \textbf{Claim 1:}  The origin of \eqref{eq:auxiliary_sys} is globally asymptotically stable.

  \begin{proof}
The system \eqref{eq:auxiliary_sys} has a cascaded structure.  {We} define a class of systems $ \Sigma_k, k \in \{1,2,...,r\} ,$
\begin{align*}
    \Sigma_k : \left\{ \begin{array}{l}
        \dot{m}_{k-1} = -\alpha_k(m_{k-1}) + m_{k}, \\
        \dot{m}_{k} = -\alpha_{k+1}(m_{k}) + m_{k+1}, \\
        \cdots,\\
        \dot{m}_{r-1} = -\alpha_{r}(m_{r-1}) 
    \end{array}\right.
\end{align*}
 with the system states $\myvar{m}_k = (m_{k-1}, m_{k}, \cdots, m_{r-1})^\top$ and the initial value drawn from the corresponding components of $\myvar{v}_0$. It is clear that the auxiliary system \eqref{eq:auxiliary_sys} is exactly the system $\Sigma_{1}$.

We prove Claim 1 in an inductive manner. {F}irst we show that {the} system $\Sigma_{r}: \dot{m}_{r-1} = -\alpha_{r}(m_{r-1}) $ is globally asymptotically stable. {T}hen we show that if {the} system $\Sigma_{k}$ is globally asymptotically stable, so is {the} system $\Sigma_{k-1}$.

{For the} radially unbounded, positive definite Lyapunov function $V_{r}(m_{r-1}) = m_{r-1}^2/2$, we obtain $\dot{V}_{r} = - m_{r-1}\alpha_r(m_{r-1})${, which is} is negative definite. Thus, {the} system $\Sigma_{r}$ is globally asymptotically stable.

Assume system  $\Sigma_{k}$ is globally asymptotically stable.  As a result, the system trajectory $\myvar{m}_k(t)$ is bounded.  {For the} Lyapunov candidate $V_{k-1}(m_{k-2}) = m_{k-2}^2/2$. Differentiation of $V_{k-1}$ yields $ \dot{V}_{k-1}(m_{k-2})  = - m_{k-2}\alpha_{k-1}(m_{k-2})  + m_{k-2} m_{k-1}$.
 Since $| m_{k-1}(t)| \leq \| \myvar{m}_k(t) \| $ is bounded, $\lim_{t \to \infty} m_{k-1}(t) = 0$, we obtain that $ m_{k-2}(t)$ is bounded. Thus $\myvar{m}_{k-1}(t) = (m_{k-2}, m_{k-1}, m_{k}, \cdots, m_{r-1})^\top$ is again bounded. From \cite[Corollary 10.3.3]{isidori1999nonlinear}, since  $ \Sigma_{k}$ is globally asymptotically stable, $\dot{m}_{k-2} = -\alpha_{k-1}(m_{k-2}) $ is globally asymptotically stable, and the integral curve of the composite system $ \Sigma_{k-1}$ is forward complete and bounded, we conclude that the system   $ \Sigma_{k-1}$ is also globally asymptotically stable.

By induction, the system $\Sigma_{1}$ is globally asymptotically stable at the origin, which completes the proof.
  \end{proof}
  
  \noindent \textbf{Claim 2:} If the system \eqref{eq: rerrange_psi} is forward complete, then the set $\mathbb{R}_{+}^{r}$ is asymptotically stable with respect to the system \eqref{eq: rerrange_psi}.
  \begin{proof}
  A closed set $\myset{A}$ is asymptotically stable with respect to a forward complete system $\Sigma$ if the set $\myset{A}$ is forward invariant, attractive and uniformly stable\cite{el2007passivity}. Forward invariance of $\mathbb{R}_{+}^{r}$ is obvious by checking the conditions of Brezis's Theorem.  In the following, we show the latter two properties.

    1) Set attraction. For any $\myvar{\psi}(t_0) \notin \mathbb{R}_+^r$, from Proposition \ref{prop:lower_bounded_system}, $\myvar{\psi}(t) \succeq \myvar{m}(t), \forall t\ge t_0$. Following Claim 1, we obtain $  \lim_{t\to \infty} \myvar{\psi}(t) \succeq \lim_{t\to \infty}\myvar{m}(t) = \myvar{0}$,  implying that $\lim_{t\to \infty} \| \myvar{\psi}(t) \|_{\mathbb{R}_{+}^{r}} = 0$. Thus, the set $\mathbb{R}_{+}^{r}$ is attractive. 
    
    2) Set uniform stability. We show this property in an inductive manner.  For $k\in  \{1,2,...,r\} $, denote $\myvar{\psi}_{k} = (\psi_{k-1}, ..., \psi_{r-1})^\top\in \mathbb{R}^{r-k+1}$ and  $\Sigma_k^{\psi}$ the subsystem of \eqref{eq: rerrange_psi} associated with $\myvar{\psi}_{k}$. It is clear that  $\Sigma_1^{\psi}$ is the system in \eqref{eq: rerrange_psi}. 
    
    Consider $k = r$.  From Proposition \ref{prop:lower_bounded_system}, $ \forall t \ge t_0, \psi_{r-1}(t) \geq m_{r-1}(t)$, thus $| \psi_{r-1}(t)|_{\mathbb{R}_+} \leq |m_{r-1}(t)|$. From Claim 1, $ \forall \epsilon>0$, $\exists \delta>0$ such that $ |\psi_{r-1}(t)|_{\mathbb{R}_{+}} \leq |m_{r-1}(t)|\leq \epsilon, \forall  |\psi_{r-1}(t_0)|_{\mathbb{R}_{+}} =  |m_{r-1}(t_0)|\leq \delta$, i.e., $\mathbb{R}_{+}$ is uniformly stable with respect to $\Sigma_r^{\psi}$.
    
    For $k \in \{2,...,r\}$, assume that $\mathbb{R}_{+}^{r-k+1}$ is uniformly stable  with respect to $\Sigma_{k}^{\psi}$ with the state $\myvar{\psi}_{k}$. For any given $\epsilon>0$, let $\epsilon^\prime >0$ such that $\alpha_{k-1}^{-1}(\epsilon^\prime) + \epsilon^\prime < \epsilon$. By assumption, there exists a $\delta^{\prime} >0$ such that  $  \| \myvar{\psi}_{k} (t)\|_{\mathbb{R}_{+}^{r-k+1}} \leq \epsilon^\prime, \forall \| \myvar{\psi}_{k} (t_0)\|_{\mathbb{R}_{+}^{r-k+1}} \leq \delta^{\prime}, \forall t\ge t_0$. Choose $\delta = \min(\delta^\prime,\alpha_{k-1}^{-1}(\epsilon^\prime))$.  For all $\| \myvar{\psi}_{k-1} (t_0) \|_{\mathbb{R}_{+}^{r-k+2}} \leq  \delta$,  we have $\psi_{k-2}(t_0) \ge - \alpha_{k-1}^{-1}(\epsilon^\prime)$ and $\| \myvar{\psi}_{k} (t_0) \|_{\mathbb{R}_{+}^{r-k+1}} \leq   \delta^{\prime}$, which implies $\| \myvar{\psi}_{k} (t) \|_{\mathbb{R}_{+}^{r-k+1}} \leq   \epsilon^{\prime}$ and $\psi_{k-1}(t) \ge  - \epsilon^\prime$ for $t\ge t_0$. Recall $\dot{\psi}_{k-2} = -\alpha_{k-1}(\psi_{k-2}) + \psi_{k-1}(t)$.  Since  $\dot{\psi}_{k-2}(t) \geq 0$ whenever $\psi_{k-2}(t) = - \alpha_{k-1}^{-1}(\epsilon')$ and $\psi_{k-2}(t_0) \geq - \alpha_{k-1}^{-1}(\epsilon')$, we obtain $\psi_{k-2}(t) \geq - \alpha_{k-1}^{-1}(\varepsilon'), \forall t \geq t_0$. Furthermore,  $ \| \myvar{\psi}_{k-1}(t) \|_{\mathbb{R}_{+}^{r-k+2}} = \| (\psi_{k-2}(t), \myvar{\psi}_{k}^\top(t))^\top \|_{\mathbb{R}_{+}^{r-k+2}} \leq \alpha_{k-1}^{-1}(\epsilon^\prime) + \epsilon^\prime \leq \epsilon, \forall t\ge t_0$. Thus  $\mathbb{R}_{+}^{r-k+2}$ is uniformly stable with respect to $\Sigma_{k-1}^{\psi}$. 
    
    Since $\mathbb{R}_+$ is uniformly stable with respect to $\Sigma_r^\psi$, then applying the previous analysis for $k = r $ ensures that $\mathbb{R}_+^{2}$ is uniformly stable with respect to $\Sigma_{r-1}^\psi$. By repeating this analysis for $k \in \{2,...,r-1\}$,  $\mathbb{R}_{+}^{r}$ is thus uniformly stable with respect to $\Sigma_{1}^{\psi}$, i.e., the system in  \eqref{eq: rerrange_psi}.
  \end{proof}
  
Now we proceed to show the asymptotic stability of the forward invariant set $\myset{C}$. Since the system \eqref{eq:auto_dyn} is forward complete, and  $\myvar{\psi}(\myvar{x})$ is well-defined in $\mathbb{R}^n$, then Claim 2 is applicable. Define $\myset{N}_{a}:=\{ \myvar{x}: \| \myvar{x} \|_{\myset{C}} \leq a\}$ for any $a >0$.

     1) Set uniform stability. Given any $\epsilon>0$ such that $ \myset{N}_{\epsilon} \subset \myset{D}$, we can take $\epsilon^\prime, \delta^\prime >0$ such that $ \epsilon^\prime \in (0,\min_{\|\myvar{x} \|_{\myset{C}} = \epsilon}  \| \myvar{\psi}(\myvar{x}) \|_{\mathbb{R}_{+}^{r}})$, and $ \| \myvar{\psi}(t) \|_{\mathbb{R}_{+}^{r}}\leq \epsilon^\prime, \forall \| \myvar{\psi}(t_0) \|_{\mathbb{R}_{+}^{r}}\leq \delta^\prime, \forall t \ge t_0$.  Here $\myset{N}_{\epsilon}$, the minimum exist since $\myset{C}$ is compact. The $\epsilon^\prime, \delta^\prime$ pair always exists following Claim 2. Based on the continuity and positive semi-definiteness of the function $\myvar{x} \mapsto \| \myvar{\psi}(\myvar{x}) \|_{\mathbb{R}_{+}^{r}}  $, there exists a $ \delta>0$ such that $ \| \myvar{\psi}(\myvar{x}) \|_{\mathbb{R}_{+}^{r}} \leq \delta^\prime, \forall \myvar{x} \in \myset{N}_{\delta}$. Thus, $\forall \myvar{x}_0 \in \myset{N}_{\delta}$, $\| \myvar{\psi}(\myvar{x}_0) \|_{\mathbb{R}_+^r} \leq \delta^\prime$, $\| \myvar{\psi}(\myvar{x}(t,\myvar{x}_0)) \|_{\mathbb{R}_+^r} \leq \epsilon^\prime$, which further implies that $ \myvar{x}(t, \myvar{x}_0) \in \myset{N}_{\epsilon}$ for $t\ge t_0$. Thus $\myset{C}$ is uniformly stable.
     
    2) Set attraction. Choose $\epsilon, \delta >0$ such that  $\myvar{x}(t,\myvar{x}_0) \in \myset{N}_{\epsilon} \subset \myset{D}, \forall \myvar{x}_0\in \myset{N}_{ \delta}, \forall t\ge t_0$ from previous analysis. For any given $\epsilon^\prime \in (0,\epsilon) $, choose $a \in (0,\min_{\|\myvar{x} \|_{\myset{C}} = \epsilon^\prime}  \| \myvar{\psi}(\myvar{x}) \|_{\mathbb{R}_{+}^{r}})$.  Following Claim 2, $\lim_{t\to \infty} \| \myvar{\psi}(\myvar{x}(t,\myvar{x}_0)) \|_{\mathbb{R}_{+}^{r}} = 0, \forall \myvar{x}_0 \in \myset{N}_{\delta}$.  There exists $T>0$ such that $\forall t >T, \forall \myvar{x}_0 \in \myset{N}_{\delta}$, $\myvar{x}(t,\myvar{x}_0) \in \Omega_{a} := \{ \myvar{x}\in \myset{D} : \| \myvar{\psi}(\myvar{x})  \|_{\mathbb{R}_{+}^{r}} \leq a  \} \subset \myset{N}_{\epsilon^\prime} $. With a diminishing $\epsilon^\prime$, we show that $\lim_{t\to \infty} \| \myvar{x}(t,\myvar{x}_0)) \|_{\myset{C}} = 0, \forall \myvar{x}_0 \in \myset{N}_{\delta}$. Thus $\myset{C}$ is attractive.  

  Thus, the set $\myset{C}$ is asymptotically stable.
    \end{proof}

\begin{remark}
Inspired by \cite{el2007passivity}, the condition on $\myset{C}$ being compact can be relaxed, but with the extra assumption that there exist class $\mathcal{K}$ functions $\alpha, \beta$ such that
\begin{equation}
    \alpha(\| \myvar{x} \|_{\myset{C}}) \leq \| \myvar{\psi}(\myvar{x}) \|_{\mathbb{R}_{+}^{r}} \leq  \beta(\| \myvar{x} \|_{\myset{C}})
\end{equation}
for all $\myvar{x} \in \myset{D}$. The proof is omitted due to space limitations. 
\end{remark}

	 Proposition \ref{prop:AS_hobf} generalizes  the asymptotic stability {results} of the set $\myset{C}$ {for relative-degree one ZBFs} \cite[Proposition~4]{Xu2015a}  to HOBFs. This property is beneficial in practice because it indicates several different robustness properties.  As discussed in  \cite{Xu2015a}, for the perturbed system $\dot{\myvar{x}} = \myvarfrak{f}(\myvar{x}) + \myvarfrak{g}(\myvar{x})$, if  $\myvarfrak{g}(\myvar{x})$ is a vanishing perturbation, i.e., $\myvarfrak{g}(\myvar{x})$ is continuous and  satisfies $\| \myvarfrak{g}(\myvar{x}) \| \le \sigma(\| \myvar{x}\|_{\myset{C}})$ for $\myvar{x} \in \myset{D}\setminus \myset{C}$ and some class $\mathcal{K}$ function $\sigma(\cdot)$, then the set $\myset{C}$ is still asymptotically stable. If $\myvarfrak{g}(\myvar{x})$ is not vanishing but sufficiently small, i.e., there exists a positive constant $k$ such that  $\| \myvarfrak{g}(\myvar{x}) \|_{\infty} \le k $, then  a new asymptotically stable set containing $\myset{C}$ as well as asymptotic convergence to this new set can be established. Interested readers can refer to \cite{Xu2015a} and the references therein for more details.

\section{High-order control barrier functions} \label{sec:hocbf}
Consider the nonlinear control affine system
\begin{equation} \label{eq:nonlinear_dyn}
    \myvardot{x} = \myvarfrak{f}(\myvar{x}) + \myvarfrak{g}(\myvar{x}) \myvar{u},
\end{equation}
{with} the state $\myvar{x} \in \mathbb{R}^n$, and the control input $ \myvar{u} \in U \subset \mathbb{R}^m$.   We will consider the simplified case where $\myvarfrak{f}$ and $\myvarfrak{g}$ are locally Lipschitz functions in $\myvar{x}$.

\begin{definition}[Least relative degree] \label{def:least_relative_degree} 
Given an arbitrary set $\myset{D}\subset \mathbb{R}^n$. A $r^{th}$-order differentiable function $h: \mathbb{R}^n \to \mathbb{R}$ has \textit{least relative degree} $r$ in $\myset{D}$  for system \eqref{eq:nonlinear_dyn} if $        L_{\mathfrak{g}} L_{\mathfrak{f}}^k h(\myvar{x}) = \myvar{0},  \forall \myvar{x} \in \myset{D} $
for  $k=1,2,\cdots,r-2$.

\end{definition}
The least relative degree condition is much weaker compared to the uniform relative degree condition {\cite{xiao2019control}}, since the  latter further requires  $L_{\mathfrak{g}} L_{\mathfrak{f}}^{r-1} h(\myvar{x}) \neq \myvar{0},  \forall \myvar{x} \in \myset{D}$.

Formally, a high-order control barrier function is defined as follows:

\begin{definition}[High-order (zeroing) control barrier function (HOCBF)] \label{def:high-order control barrier function} 
Consider control system \eqref{eq:nonlinear_dyn}, and  a $r^{th}$-order differentiable function $h: \mathbb{R}^n \to \mathbb{R}$. The function $h$ is called a \textbf{high-order (zeroing) control barrier function} (of order $r$), if there exist differentiable extended class $\mathcal{K}$ functions {$\alpha_{k}$, $k\in \{1,...,r\}$,} and an open set $\myset{D}$ with $ \myset{C}:=\bigcap_{k = 1}^{r} \myset{C}_{\psi_{k-1}}\subset \myset{D} \subset \mathbb{R}^n$, where $\psi_{k}$ is given in \eqref{eq:a_series_psi}, such that
\begin{enumerate}
    \item $h$ is of least relative order $r$ in $\myset{D}$;
    \item    for all $ \myvar{x} \in \myset{D} $,
\begin{multline} \label{eq:condition_HOCBF}
\underset{\myvar{u} \in \myset{U}}{\sup} \psi_r (\myvar{x}) = \underset{\myvar{u} \in \myset{U}}{\sup} [L_{\mathfrak{f}}\psi_{r-1} (\myvar{x}) 
+ L_{\mathfrak{g}} \psi_{r-1}(\myvar{x}) \myvar{u} \\ + \alpha_r(\psi_{r-1}(\myvar{x}))] \geq 0. 
\end{multline} 
\end{enumerate}
\end{definition}

When letting $r = 1$, an HOCBF yields {the} zeroing control barrier function {of} \cite{Xu2015a}. This definition is also more general to its counterparts in \cite{xu2018constrained} and \cite{xiao2019control} {since} : 1) $\alpha_k$ in \cite{xu2018constrained} is restricted  to {the set of} linear functions, while $\alpha_k$ in  \cite{xiao2019control} {is restricted to the set of} class $\mathcal{K}$ functions. {We note that class-$\mathcal{K}$ functions} are not well-defined for  $\myvar{x}\in \myset{D}\setminus \myset{C}$, {and} thus the robustness results presented here cannot be {applied to the barriers of \cite{xiao2019control}}; 2) {the} uniform relative degree $r$ {condition} is not needed {here}, {and thus our formulation is less restrictive than that of \cite{xiao2019control}}; 3) while \cite{xu2018constrained} and \cite{xiao2019control} both assume the closed-loop system \eqref{eq:nonlinear_dyn} to be forward complete {to ensure forward invariance}, this is not required here. Hereafter we denote $\alpha = \alpha_r$ for notational brevity.

Similar to Proposition \ref{prop:hobf}, the following result guarantees the forward invariance of $\myset{C}$. Given an HOCBF $h$, for all $\myvar{x} \in \myset{D}$, {we} define the set
\begin{multline} \label{eq:k_HOCBF}
    K_{HOCBF}(\myvar{x}) = \{ \myvar{u}\in U: L_{\mathfrak{f}} \psi_{r-1} (\myvar{x}) \\
    + L_{\mathfrak{g}} \psi_{r-1}(\myvar{x}) \myvar{u} + \alpha(\psi_{r-1}(\myvar{x})) \geq 0 \}.
\end{multline}

\begin{thm}\label{thm:forw_inv_HOCBF}
Consider an HOCBF $h$, $\psi_{k-1}, 1\leq k\leq r$ defined in \eqref{eq:a_series_psi}. Then any locally Lipschitz continuous controller $\myvar{u}: \mathbb{R}^n \to \mathbb{R}^m$ such that $\myvar{u}(\myvar{x}) \in K_{HOCBF}$ will render the set $\myset{C}:=\bigcap_{k = 1}^{r} \myset{C}_{\psi_{k-1}}$ forward invariant for {the} system \eqref{eq:nonlinear_dyn}.
\end{thm}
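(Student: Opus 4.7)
The plan is to reduce Theorem \ref{thm:forw_inv_HOCBF} directly to Proposition \ref{prop:hobf} by reinterpreting the closed-loop system as an autonomous one. First I would substitute the given locally Lipschitz $\myvar{u}(\myvar{x})$ into \eqref{eq:nonlinear_dyn} to obtain the closed-loop vector field $\myvarfrak{f}_{cl}(\myvar{x}) := \myvarfrak{f}(\myvar{x}) + \myvarfrak{g}(\myvar{x}) \myvar{u}(\myvar{x})$, which is locally Lipschitz on $\mathbb{R}^n$ as a product/sum of locally Lipschitz functions. The resulting autonomous system $\dot{\myvar{x}} = \myvarfrak{f}_{cl}(\myvar{x})$ then fits into the framework of Section~II.

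Next I would verify that the auxiliary functions $\psi_k(\myvar{x})$ from \eqref{eq:a_series_psi} are \emph{the same} state-only functions whether one uses $\myvarfrak{f}$ or $\myvarfrak{f}_{cl}$ to form the time derivatives, provided $k \leq r-1$. I would prove this by a short induction on $k$ using Definition \ref{def:least_relative_degree}: assuming $\psi_{k-1}$ is a function of $\myvar{x}$ alone with gradient annihilating $\myvarfrak{g}$ (i.e., $L_{\mathfrak{g}} \psi_{k-1} \equiv \myvar{0}$ on $\myset{D}$), the identity $\tfrac{d}{dt}\psi_{k-1} = L_{\mathfrak{f}}\psi_{k-1} + L_{\mathfrak{g}}\psi_{k-1}\,\myvar{u} = L_{\mathfrak{f}}\psi_{k-1}$ along any controlled trajectory shows that $\psi_k = L_{\mathfrak{f}}\psi_{k-1} + \alpha_k(\psi_{k-1})$ is again a state-only function, and its Lie derivative along $\myvarfrak{g}$ vanishes on $\myset{D}$ because the least relative degree condition forbids $\myvar{u}$ from entering the derivatives of $h$ before order $r$. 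This step is what allows the same sets $\myset{C}_{\psi_{k-1}}$, and hence the same $\myset{C}$, to describe the safe set under the closed-loop dynamics.

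Finally, I would evaluate $\psi_r$ along the closed-loop flow: by construction,
\begin{equation*}
\psi_r(\myvar{x}) = L_{\mathfrak{f}} \psi_{r-1}(\myvar{x}) + L_{\mathfrak{g}} \psi_{r-1}(\myvar{x})\, \myvar{u}(\myvar{x}) + \alpha_r(\psi_{r-1}(\myvar{x})),
\end{equation*}
and since $\myvar{u}(\myvar{x}) \in K_{HOCBF}(\myvar{x})$ for every $\myvar{x} \in \myset{D}$ by hypothesis, this is nonnegative on $\myset{D}$. That is precisely condition \eqref{eq:hobf_def} of Definition \ref{def:hobf} applied to the autonomous system $\dot{\myvar{x}} = \myvarfrak{f}_{cl}(\myvar{x})$; thus $h$ is an HOBF for the closed-loop system on $\myset{D}$. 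Invoking Proposition \ref{prop:hobf} then yields forward invariance of $\myset{C}$, completing the proof.

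The main obstacle is the second step: justifying that the intermediate $\psi_k$ are unambiguous, state-only functions that admit the same definition in the controlled and autonomous settings. Everything else is a direct invocation of Proposition \ref{prop:hobf}. Once the least-relative-degree induction is in place, no additional forward completeness assumption is needed, because Brezis's theorem underlying Proposition \ref{prop:hobf} only requires local Lipschitzness of $\myvarfrak{f}_{cl}$, which is guaranteed by hypothesis.
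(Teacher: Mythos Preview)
Your proposal is correct and follows exactly the paper's approach: the paper's entire proof is the single sentence ``The proof follows directly from Proposition~\ref{prop:hobf},'' and your write-up simply spells out the reduction (closed-loop vector field is locally Lipschitz, the $\psi_k$ for $k\le r-1$ are state-only by the least-relative-degree condition, and $\myvar{u}(\myvar{x})\in K_{HOCBF}$ gives $\psi_r\ge 0$ on $\myset{D}$) that the paper leaves implicit.
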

\begin{proof}
The proof follows directly from Proposition \ref{prop:hobf}.
\end{proof}

\begin{remark}
If there exists an HOCBF $h$ and a locally Lipschitz continuous controller $\myvar{u}: \mathbb{R}^n \to \mathbb{R}^m$ such that $\myset{C}$ is compact, $\myvar{u}(\myvar{x}) \in K_{HOCBF}$, and \eqref{eq:nonlinear_dyn} forward complete, then the set $\myset{C}$ is asymptotically stable. This follows directly from the proof of Proposition \ref{prop:AS_hobf}. This property is useful in practice because, for example,  when  the system starts outside of the safe set $\myset{D}\setminus \myset{C}$, we know the system state will asymptotically reach the set $\myset{C} $.
\end{remark}

\begin{remark}
Consider the perturbed system
\begin{equation}
    \dot{\myvar{x}} = \mathfrak{f}(\myvar{x}) + \mathfrak{g}(\myvar{x}) \myvar{u} + \mathfrak{p}(x)\myvar{\omega},
\end{equation}
where  $\myvar{\omega} \in \mathbb{R}^{v}$ is  an external disturbance, while $  \mathfrak{p}(x)\myvar{\omega}$ represents a structured disturbance/uncertainty that is nether vanishing nor sufficiently small. If $        L_{\mathfrak{p}} L_{\mathfrak{f}}^k h(\myvar{x}) = \myvar{0},  \forall \myvar{x} \in \myset{D} $ for  $k=1,2,\cdots,r-2$ (i.e., $h$ has the same least relative degree with respect to $\myvar{\omega}$ as with respect to $\myvar{x}$), then we could robustify the HOCBF condition using a similar technique to \cite{jankovic2018robust} by requiring $L_{\mathfrak{f}}\psi_{r-1} (\myvar{x}) 
+ L_{\mathfrak{g}} \psi_{r-1}(\myvar{x}) \myvar{u} + \| L_{\mathfrak{p}} \psi_{r-1}(\myvar{x}) \| \bar{\omega}  + \alpha_r(\psi_{r-1}(\myvar{x}))\geq 0$, where $ \bar{\omega}$  is the known upper bound of $\myvar{\omega}(t) $. If this condition holds, then the set $\myset{C}$ is again rendered forward invariant for the perturbed system. The proof also  follows  directly  from  Proposition  \ref{prop:hobf}.
		
\end{remark}

{Motivated by existing methods \cite{Ames2019control}, we define} a point-wise minimum-invasive controller. Suppose that a nominal control input $ \myvar{u}_{\text{nom}}: \myset{D} \to \mathbb{R}^m$ Lipschitz continuous in $ \myvar{x} $, has been designed, and we need to modify the control input online to account for the safety constraints. {T}he modified controller is given by the quadratic program below:
\begin{equation} \label{eq:controllerQP}
\begin{aligned}
& \myvar{u}(\myvar{x}) = \arg   \min_{\myvar{u} \in U }\|\myvar{u} - \myvar{u}_{\text{nom}}\|_2^2 \\
   & \hspace{-10pt} \text{s.t.} \quad L_{\mathfrak{g}} \psi_{r-1}( \myvar{x}) \myvar{u}  + L_{\mathfrak{f}} \psi_{r-1}(\myvar{x}) + \alpha(\psi_{r-1}(\myvar{x}))\ge 0.
\end{aligned}
\end{equation}
  This formulation is known as ``safety-critical" in that constraint satisfaction is prioritized over the nominal control law.

\section{Singularity-free, Performance-critical HOCBFs}

In the previous section, the existence of an HOCBF ensures safety of the overall system. However the construction of the HOCBF is not straightforward in general. Following a similar analysis to Section 3.1 of \cite{Xu2015a}, for any $r^{th}$-order differentiable function $h: \mathbb{R}^n \to \mathbb{R}$, if $U = \mathbb{R}^m$ and $ L_{\mathfrak{g}}L_{\mathfrak{f}}^{r-1} h( \myvar{x}) \neq \myvar{0}, \forall \myvar{x} \in \myset{D}$ (i.e., $h$ is of uniform relative degree $r$ in $\myset{D}$), then \eqref{eq:controllerQP} is feasible for all $ \myvar{x} \in \myset{D}$ and $h$ is an HOCBF. Moreover, the resulting controller is locally Lipschitz continuous in $\myset{D}$. In the following section, we will study the case when $ U = \mathbb{R}^m, L_{\mathfrak{g}}L_{\mathfrak{f}}^{r-1} h( \myvar{x}) = \myvar{0}$ for some $ \myvar{x} \in \myset{D}$ (i.e., $h$ is of \textit{least} relative degree $r$ in $\myset{D}$).

\subsection{Singularity-free HOCBF design }  

One notable difference between Definition \eqref{def:high-order control barrier function} and the existing constructions  \cite{xu2018constrained,xiao2019control,wences2020correct} is that an HOCBF candidate does not need to have uniform relative degree $r$. The motivation {for this} comes from the fact that even the double integrator dynamics with circular region constraints will violate this assumption, as {shown} in the following example.

\begin{example} \label{exp:circular_constraint}
Consider the double integrator dynamics $\begin{psmallmatrix}
\myvardot{p} \\
\myvardot{v}
\end{psmallmatrix} =  \begin{psmallmatrix}
\myvar{v} \\
\myvar{0}
\end{psmallmatrix} +  \begin{psmallmatrix}
\myvar{0} \\
I_2
\end{psmallmatrix} \myvar{u}$
with $\myvar{p}, \myvar{v}, \myvar{u}\in \mathbb{R}^2$, $\myvar{x} = (\myvar{p}, \myvar{v})$. Let $b( \myvar{p},\myvar{v}) := d^2 - \lVert \myvar{p} \rVert^2$ defining a circular region  in $\mathbb{R}^2$ with radius $d $.  $\myset{C}_{b} = \{ (\myvar{p},\myvar{v}): b(\myvar{x}) \ge 0  \}$. With straightforward calculation, we obtain $L_{\mathfrak{g}} b = \myvar{0}, L_{\mathfrak{g}}L_{\mathfrak{f}}b = -2\myvar{p}^\top.$
Thus, $ L_{\mathfrak{g}}L_{\mathfrak{f}}b(\myvar{x}) = \myvar{0}$ for  $\myvar{x}\in \myset{E} = \{ (\myvar{p},\myvar{v}): \myvar{p} =  \big(\begin{smallmatrix}
  0\\
  0
\end{smallmatrix}\big) \} \subset \myset{C}_b$, which does not satisfy the conditions from \cite{xiao2019control,wences2020correct}. We will show how the proposed HOCBF considered here addresses the singularity issue for application {to} more general systems/constraints.
\end{example}

We now present a method to address the possible infeasibility of  the quadratic program \eqref{eq:controllerQP} due to the existence of singular points. {In} the following, we show that as long as the singular points are strictly bounded away from the boundary, a novel control barrier function  can be constructed such that the constraints in \eqref{eq:controllerQP} are always feasible.

\begin{proposition} \label{prop:h_hocbf}
 Consider a smooth function $b:\mathbb{R}^n \to \mathbb{R}$ with the associated set $\myset{C}_b$ and an open set $\myset{D}$ with $\myset{C}_b \subset \myset{D}$. Let $b$ have least relative degree $r$ in $\myset{D}$ and define the set $\myset{E} := \{ \myvar{x} \in \myset{D}:  L_{\mathfrak{g}} L_{\mathfrak{f}}^{r-1} b(\myvar{x}) = \myvar{0} \}$. Assume that there exists a scalar $\xi > 0$ such that  
\begin{equation} \label{eq:bounded_away_from_safety_region}
    \myset{E} \subseteq \myset{C}_{b,\xi}.
\end{equation}

Define $h:\mathbb{R}^n \to \mathbb{R}$ as
\begin{equation} \label{eq:h_definition}
    h(\myvar{x}) = \chi\left(\tfrac{b(\myvar{x})}{\xi} \right),
\end{equation}
with $\chi: \mathbb{R} \to \mathbb{R}$ a $r^{th}$-order differentiable function satisfying 
\begin{equation} \label{eq:chi_property}
\left\{
\begin{array}{cc}
\chi(0) = 0, \\
\chi(\tau) = 1, & \text{ for } \tau \ge 1,  \\
 \tfrac{d\chi}{d\tau} (\tau)>0, &  \text{ for } \tau < 1.    
\end{array}\right.
\end{equation}

If $ U = \mathbb{R}^m$, then the function $h$ is an HOCBF.
\end{proposition}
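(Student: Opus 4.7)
The plan is to verify the two conditions of Definition~\ref{def:high-order control barrier function}: (i) $h$ has least relative degree $r$ in $\myset{D}$, and (ii) the supremum inequality \eqref{eq:condition_HOCBF} holds on $\myset{D}$ (together with the standing containment $\myset{C}\subset\myset{D}$, which follows easily since $\chi$ is strictly increasing on $(-\infty,1)$ with $\chi(0)=0$, so $\myset{C}_h=\myset{C}_b\subset\myset{D}$). For (i), I would use the chain rule (Faà di Bruno) to expand $L_{\mathfrak{f}}^k h$ as a polynomial in $\chi^{(i)}(b/\xi)$ and $L_{\mathfrak{f}}^j b$ with $j\le k$. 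Applying $L_{\mathfrak{g}}$ to such a polynomial only produces terms containing one of $L_{\mathfrak{g}} b$ or $L_{\mathfrak{g}} L_{\mathfrak{f}}^j b$ with $j\le k$. Since $b$ has least relative degree $r$ in $\myset{D}$, these all vanish whenever $k\le r-2$, giving $L_{\mathfrak{g}} L_{\mathfrak{f}}^k h\equiv 0$ on $\myset{D}$ for $k=0,1,\dots,r-2$.

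The same expansion, inductively lifted through the $(\tfrac{d}{dt}+\alpha_k)$ recursion that defines $\psi_{k}$, shows that $\psi_{r-1}$ does not depend on $\myvar{u}$ and that
\[
L_{\mathfrak{g}}\psi_{r-1}(\myvar{x}) \;=\; \tfrac{1}{\xi}\,\chi'\!\bigl(b(\myvar{x})/\xi\bigr)\,L_{\mathfrak{g}} L_{\mathfrak{f}}^{\,r-1} b(\myvar{x}),
\]
because every other summand picks up a factor $L_{\mathfrak{g}} L_{\mathfrak{f}}^{j} b$ with $j\le r-2$. Consequently the ``bad'' locus where the sup in \eqref{eq:condition_HOCBF} is not automatic (since $U=\mathbb{R}^m$) is exactly $\{\chi'(b/\xi)=0\}\cup\myset{E}$. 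From the definition of $\chi$, $\chi'(b/\xi)=0$ forces $b(\myvar{x})\ge \xi$, and by hypothesis \eqref{eq:bounded_away_from_safety_region} every $\myvar{x}\in\myset{E}$ also satisfies $b(\myvar{x})\ge\xi$. Hence on the entire bad locus we have $b(\myvar{x})/\xi\ge 1$.

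The key step is then the following claim: at any $\myvar{x}$ with $b(\myvar{x})\ge\xi$, one has $L_{\mathfrak{f}}^k h(\myvar{x})=0$ for every $k\ge 1$. This follows because $\chi$ is constant on $[1,\infty)$, so by the required $r^{\text{th}}$-order differentiability of $\chi$ all one-sided derivatives $\chi^{(i)}(b/\xi)$ vanish for $i\ge 1$, and every term in the Faà di Bruno expansion of $L_{\mathfrak{f}}^k h$ carries at least one such derivative factor. Feeding this into the recursion $\psi_k=L_{\mathfrak{f}}\psi_{k-1}+\alpha_k(\psi_{k-1})$, a straightforward induction in $k$ yields $\psi_k(\myvar{x})=\alpha_k\!\circ\!\alpha_{k-1}\!\circ\!\cdots\!\circ\!\alpha_1(1)>0$ (strictly positive, since each $\alpha_j$ is extended class $\mathcal{K}$ and $h(\myvar{x})=\chi(b/\xi)=1$). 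In particular $\psi_r(\myvar{x})>0$ is achieved with any $\myvar{u}$ (indeed $L_{\mathfrak{g}}\psi_{r-1}=0$ there), so \eqref{eq:condition_HOCBF} holds on the bad locus. Off the bad locus, $L_{\mathfrak{g}}\psi_{r-1}(\myvar{x})\neq \myvar{0}$ and $U=\mathbb{R}^m$ makes the sup equal to $+\infty$. Combining cases yields (ii), and hence $h$ is an HOCBF.

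The main obstacle I anticipate is the bookkeeping in the second paragraph: carefully tracking the Faà di Bruno terms to justify that the only surviving piece of $L_{\mathfrak{g}}\psi_{r-1}$ is the one proportional to $\chi'(b/\xi)\,L_{\mathfrak{g}} L_{\mathfrak{f}}^{\,r-1} b$, and that the $\alpha_k$-compositions in the $\psi_k$ recursion do not introduce hidden $L_{\mathfrak{g}}$-dependence at lower orders. Once that structural identity is in place, the remainder is a clean case split driven by the geometric hypothesis $\myset{E}\subseteq\myset{C}_{b,\xi}$ together with the flat-top behavior of $\chi$ on $[1,\infty)$.
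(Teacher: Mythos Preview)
Your proposal is correct and follows essentially the same approach as the paper: both arguments establish the key identity $L_{\mathfrak{g}}\psi_{r-1}=\tfrac{1}{\xi}\chi'(b/\xi)\,L_{\mathfrak{g}}L_{\mathfrak{f}}^{r-1}b$, then split into the cases $b(\myvar{x})\ge\xi$ (where all higher Lie derivatives of $h$ vanish, so $\psi_r$ collapses to $\alpha_r\circ\cdots\circ\alpha_1(1)>0$) and $b(\myvar{x})<\xi$ (where $L_{\mathfrak{g}}\psi_{r-1}\neq\myvar{0}$ and $U=\mathbb{R}^m$ trivially gives feasibility). The only cosmetic difference is that the paper computes $L_{\mathfrak{f}}\psi_k$ and $L_{\mathfrak{g}}\psi_k$ explicitly for $k=0,1$ and then iterates, whereas you invoke Fa\`a di Bruno to organize the same chain-rule bookkeeping; the content is identical.
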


\begin{proof}
  It is trivial to verify that $ \myset{C}_{h} = \myset{C}_{b}:=\{\myvar{x} \in \mathbb{R}^n: b(\myvar{x})\ge 0\} $, and $r$ is also the least relative degree of function $h$. We need to prove that there always exist a $\myvar{u} \in \mathbb{R}^m$ and sufficiently smooth extended class $\mathcal{K}$ functions $\alpha_{k}$s such that 
\begin{equation} \label{eq:cbf_constraint}
    L_{\mathfrak{g}} \psi_{r-1} \myvar{u} + L_{\mathfrak{f}} \psi_{r-1} + \alpha(\psi_{r-1}) \ge 0
\end{equation}
 holds for all $\myvar{x}\in \myset{D}$ with $\psi_{k-1}, k =  1,2, \cdots, r$ defined in \eqref{eq:a_series_psi}. Denote    $ \myset{C}:=\bigcap_{k = 1}^{r} \myset{C}_{\psi_{k-1}}$ as in Definition \ref{def:high-order control barrier function}.
 
 We first examine the properties of $L_{\mathfrak{g}} \psi_{r-1}$ and $L_{\mathfrak{f}} \psi_{r-1}$. With $h(\myvar{x})$ defined in \eqref{eq:h_definition}, we obtain $L_{\mathfrak{f}}h  = \tfrac{d\chi}{d\tau}(b(\myvar{x})/\xi) \tfrac{\partial b/\xi}{\partial \myvar{x}}\cdot \mathfrak{f} = \tfrac{1}{\xi}\tfrac{d\chi}{d\tau} L_{\mathfrak{f}}b,
         L_{\mathfrak{g}} h =  \tfrac{1}{\xi}\tfrac{d\chi}{d\tau} L_{\mathfrak{g}}b.$
If $r > 1$, then $  L_{\mathfrak{g}} h = \myvar{0}$. Note that $\psi_1 = L_{\mathfrak{f}}h + \alpha_1(h(\myvar{x})) $, it derives
\begin{equation}
    \begin{aligned}
        L_{\mathfrak{f}}\psi_1 &  = \tfrac{1}{\xi} \big(  \tfrac{1}{\xi} \tfrac{d^2 \chi}{d \tau^2} L_{\mathfrak{f}}b L_{\mathfrak{f}}b + \tfrac{d\chi}{d\tau}L^2_{\mathfrak{f}}b \big) + \tfrac{d\alpha_1}{dh} L_{\mathfrak{f}}h \\
        L_{\mathfrak{g}} \psi_1 & =  \tfrac{1}{\xi} \big( \tfrac{1}{\xi} \tfrac{d^2 \chi}{d \tau^2}  L_{\mathfrak{f}}b   \tfrac{\partial b}{ \partial \myvar{x}} \cdot \mathfrak{g} + \tfrac{d \chi}{d \tau} L_{\mathfrak{g}} L_{\mathfrak{f}}b \big) + \tfrac{d\alpha_1}{d h} L_{\mathfrak{g}} h \\
        & = \tfrac{1}{\xi} \tfrac{d\chi}{d\tau}L_{\mathfrak{g}}L_{\mathfrak{f}}b
    \end{aligned}
\end{equation}
If $r>2$, then $L_{\mathfrak{g}} \psi_1 = \myvar{0}$ and we can iterate these calculations until $\psi_{r-1}$ that gives us $       L_{\mathfrak{g}}\psi_{r-1}  = \tfrac{1}{\xi} \tfrac{d\chi}{d\tau}L_{\mathfrak{g}}L_{\mathfrak{f}}^{r-1} b.$
Thus, in view of the properties of $\chi$ given in \eqref{eq:chi_property},  we know
\begin{enumerate}
    \item $ L_{\mathfrak{g}}\psi_{r-1} (\myvar{x}) = \myvar{0}$ if and only if  $\myvar{x} \in \myset{D}\cap \myset{C}_{b,\xi}$;
    \item $ L_{\mathfrak{f}}\psi_{r-1}(\myvar{x})  = 0$ if $ \myvar{x} \in \myset{D}\cap \myset{C}_{b,\xi}$.
\end{enumerate}
 
The condition in \eqref{eq:cbf_constraint} is examined in  two cases. For $\myvar{x}\in \myset{D}\cap \myset{C}_{b,\xi}$, we derive that $L_{\mathfrak{g}}\psi_{r-1}(\myvar{x}) = \myvar{0}$,  $L_{\mathfrak{f}}\psi_{r-1}(\myvar{x}) = 0, \alpha(\psi_{r-1}) = \alpha( \alpha_{r-1}(\psi_{r-2})) = \cdots = \alpha_{r} \circ \alpha_{r-1} \circ \cdots \alpha_{1}(h(\myvar{x})) = \alpha_{r} \circ \alpha_{r-1} \circ \cdots \alpha_{1}(1) > 0 $, thus the condition in \eqref{eq:cbf_constraint} is trivially satisfied. For $ \myvar{x} \in \myset{D}\setminus (\myset{D} \cap \myset{C}_{b,\xi}) $, as $L_{\mathfrak{g}} \psi_{r-1} \neq \myvar{0}$ and the condition in \eqref{eq:cbf_constraint} imposes a linear constraint on $\myvar{u}$. {Thus}, there always exists a $\myvar{u} \in \mathbb{R}^m$ that satisfies \eqref{eq:cbf_constraint} and $h$ is an HOCBF via Definition \ref{def:high-order control barrier function}. 
\end{proof}

Here we note that the Assumption in \eqref{eq:bounded_away_from_safety_region} is intuitive and easy-to-check as it requires all the singularity points {to be} inside $\myset{C}_{b,\xi}$ for some positive number $\xi$. In the double integrator example, this assumption is clearly fulfilled as $\myset{E}  \subseteq \myset{C}_{b,\xi}$ for any $ 0<\xi<d $. We further show that the resulting controller is locally Lipschitz continuous. 

\begin{proposition}\label{prop:lipschitz_continuity_solution}
  Assume  the conditions in Proposition \ref{prop:h_hocbf}  hold and the nominal controller $\myvar{u}_{nom} : \myset{D} \to \mathbb{R}^m$ is  bounded and locally Lipschitz continuous in $\myset{D}$. With $h$ given in \eqref{eq:h_definition} and $\psi_{k}$ given in \eqref{eq:a_series_psi}, assume furthermore that $L_{\mathfrak{g}} \psi_{r-1}$ and $L_{\mathfrak{f}} \psi_{r-1}$ are locally Lipschitz continuous. Then, 
   \begin{enumerate}
        \item  the solution to the quadratic program  \eqref{eq:controllerQP} is locally Lipschitz continuous in $\myset{D}$;
     \item  the controller \eqref{eq:controllerQP} renders the set $\myset{C}:= \bigcap_{k =1}^{r}  \myset{C}_{\psi_{k-1}} $ forward invariant for system in \eqref{eq:nonlinear_dyn}.
 \end{enumerate}
\end{proposition}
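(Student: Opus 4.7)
My plan is to exploit the fact that the quadratic program \eqref{eq:controllerQP} has a single affine inequality constraint, for which an explicit projection formula is available, and then do a local case analysis based on whether the constraint is active and whether $L_{\mathfrak{g}}\psi_{r-1}$ degenerates. Writing $A(\myvar{x}) := L_{\mathfrak{g}}\psi_{r-1}(\myvar{x})$ and $B(\myvar{x}) := L_{\mathfrak{f}}\psi_{r-1}(\myvar{x}) + \alpha(\psi_{r-1}(\myvar{x}))$, the KKT conditions yield
\[
\myvar{u}^{\star}(\myvar{x}) \;=\; \myvar{u}_{\text{nom}}(\myvar{x}) + \frac{\max\{0,\,-A(\myvar{x})\myvar{u}_{\text{nom}}(\myvar{x}) - B(\myvar{x})\}}{\|A(\myvar{x})\|^{2}}\, A(\myvar{x})^{\top}
\]
whenever $A(\myvar{x}) \neq \myvar{0}$, and $\myvar{u}^{\star}(\myvar{x}) = \myvar{u}_{\text{nom}}(\myvar{x})$ otherwise. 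The first step is to confirm this description is well-defined on all of $\myset{D}$, which uses the structural facts extracted in the proof of Proposition \ref{prop:h_hocbf}: on $\{A = \myvar{0}\} = \myset{D}\cap \myset{C}_{b,\xi}$ we have $L_{\mathfrak{f}}\psi_{r-1} \equiv 0$ and $\alpha(\psi_{r-1}) = \alpha_r\circ\alpha_{r-1}\circ\cdots\circ\alpha_1(1) > 0$, so $B(\myvar{x}) > 0$ strictly and the nominal input trivially satisfies the constraint.

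Next, I would establish local Lipschitz continuity at an arbitrary $\myvar{x}_0 \in \myset{D}$ by a case split. If $A(\myvar{x}_0)\myvar{u}_{\text{nom}}(\myvar{x}_0) + B(\myvar{x}_0) > 0$, by continuity the strict inequality persists on an open neighborhood, so $\myvar{u}^{\star} \equiv \myvar{u}_{\text{nom}}$ there and the conclusion follows from the hypothesis on $\myvar{u}_{\text{nom}}$. If instead $A(\myvar{x}_0)\myvar{u}_{\text{nom}}(\myvar{x}_0) + B(\myvar{x}_0) \le 0$, the crucial observation is that $A(\myvar{x}_0) \neq \myvar{0}$: otherwise the preceding paragraph would give $B(\myvar{x}_0) > 0$, contradicting non-positivity since $A(\myvar{x}_0)\myvar{u}_{\text{nom}}(\myvar{x}_0)=0$. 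Continuity of $A$ then guarantees $\|A(\myvar{x})\|$ is bounded below on a neighborhood, so $1/\|A\|^{2}$ is locally Lipschitz there, and the entire closed-form expression becomes a composition of locally Lipschitz maps (using that $y \mapsto \max\{0,y\}$ is $1$-Lipschitz and that $A$, $B$, $\myvar{u}_{\text{nom}}$ are locally Lipschitz by assumption).

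The main obstacle I foresee is the transition between the two cases, where the positive part switches on and one could worry about simultaneous degeneracy of $A$. The unified $\max$-formulation handles this automatically: wherever $-A\myvar{u}_{\text{nom}} - B = 0$ the formula continuously joins the two branches, and the argument above rules out $A(\myvar{x}_0) = \myvar{0}$ on the active side, so $\|A\|$ is bounded below on a neighborhood of every interface point and Lipschitz continuity is preserved. This establishes claim~(1). For claim~(2), by construction $\myvar{u}^{\star}(\myvar{x}) \in K_{HOCBF}(\myvar{x})$ for every $\myvar{x} \in \myset{D}$; combining this with~(1) and invoking Theorem~\ref{thm:forw_inv_HOCBF} immediately yields forward invariance of $\myset{C}$ under the closed-loop dynamics \eqref{eq:nonlinear_dyn}.
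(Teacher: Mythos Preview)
Your approach is correct and follows the same overall strategy as the paper: derive the closed-form KKT solution for the single-constraint QP and then argue local Lipschitz continuity via a case split. The paper writes the solution in an equivalent form (a multiplier $\mu$ that vanishes when the constraint is slack) and likewise invokes the structural facts $L_{\mathfrak{g}}\psi_{r-1}=\myvar{0}$, $L_{\mathfrak{f}}\psi_{r-1}=0$, and $\alpha(\psi_{r-1})>0$ on $\myset{D}\cap\myset{C}_{b,\xi}$ from Proposition~\ref{prop:h_hocbf}.

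The one genuine difference is the decomposition used for the Lipschitz argument. The paper splits $\myset{D}$ geometrically into $\myset{D}\cap\myset{C}_{b,\xi}$ (where $\myvar{u}=\myvar{u}_{\text{nom}}$) and its complement (where $L_{\mathfrak{g}}\psi_{r-1}\neq\myvar{0}$), shows local Lipschitz continuity on each piece, and then runs a separate sequential argument to establish continuity across the interface $\partial\myset{C}_{b,\xi}$. Your split---based instead on the sign of $A\,\myvar{u}_{\text{nom}}+B$ at the point under consideration---is more economical: it directly produces, around \emph{every} $\myvar{x}_0\in\myset{D}$, an open neighborhood on which either $\myvar{u}^\star\equiv\myvar{u}_{\text{nom}}$ or $\|A\|$ is bounded away from zero, so local Lipschitz continuity follows immediately from the $\max$-formula without a separate boundary patching step. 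Both routes reach the same conclusion; yours simply avoids having to argue that Lipschitz on two pieces plus continuity at the seam yields Lipschitz on the union. Claim~(2) is handled identically in both by appealing to Theorem~\ref{thm:forw_inv_HOCBF}.
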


\begin{proof}
	The feasibility of the linear inequality constraint on $ \myvar{u} $ is guaranteed in Proposition \ref{prop:h_hocbf} for every $ \myvar{x} \in \myset{D}  $. The solution to the quadratic program  \eqref{eq:controllerQP} has a closed-form solution, given by the KKT condition \cite{boyd2004convex}, as
	\begin{equation} \label{eq:controllerSol}
	\begin{aligned}
	\myvar{u}(\myvar{x}) = \myvar{u}_{\text{nom}}(\myvar{x}) +   \mu L_{\mathfrak{g}}^\top \psi_{r-1}(\myvar{x})
	\end{aligned}
	\end{equation}
	with 
	\begin{multline*}
	   	\mu  = \left\{ \begin{array}{l}
	0,   \quad  \text{   if }   L_{\mathfrak{g}}\psi_{r-1} \myvar{u}_{\text{nom}} + \alpha(\psi_{r-1}) + L_{\mathfrak{f}}\psi_{r-1} \ge 0 , \\
	   \dfrac{- L_{\mathfrak{g}}\psi_{r-1} \myvar{u}_{\text{nom}} - \alpha(\psi_{r-1}) - L_{\mathfrak{f}}\psi_{r-1}}{\| L_{\mathfrak{g}}\psi_{r-1} \|^2},  \text{ otherwise. }
	\end{array}\right. 
	\end{multline*}
The derivation is straightforward considering whether the linear constraint on $\myvar{u}$ in \eqref{eq:controllerQP} is active or not and thus omitted here. Recall that $ L_{\mathfrak{g}}\psi_{r-1} = \myvar{0} $ if and only if $ \myvar{x}\in  \myset{D}  \cap \myset{C}_{b,\xi} $, and  $L_{\mathfrak{g}}\psi_{r-1} \myvar{u}_{\text{nom}} + \alpha(\psi_{r-1}) + L_{\mathfrak{f}}\psi_{r-1} \ge 0 $ is trivially satisfied for $\myvar{x} \in \myset{D}\cap \myset{C}_{b,\xi} $. {Thus}  $\mu$ and $\myvar{u}(\myvar{x})$ are well-defined in $  \myset{D} $.

The solution in \eqref{eq:controllerSol} can be viewed as $\myvar{u}(\myvar{x}) = \omega_1(\myvar{x}) + \omega_2(\omega_3(\myvar{x}))\omega_4(\myvar{x})$
with $\omega_1(\myvar{x}) = \myvar{u}_{\text{nom}}(\myvar{x}), \omega_2(v) = \left\{ \begin{smallmatrix} 
0, & \text{ if } v \ge 0 \\
v, & \text{ if } v < 0 \end{smallmatrix}\right. ,  \omega_3(\myvar{x}) = L_{\mathfrak{g}}\psi_{r-1} \myvar{u}_{\text{nom}} + \alpha(\psi_{r-1}) + L_{\mathfrak{f}}\psi_{r-1}, \omega_4(\myvar{x}) =  \tfrac{- L_{\mathfrak{g}}^\top\psi_{r-1} }{\| L_{\mathfrak{g}}\psi_{r-1} \|^2}$. For $\myvar{x} \in \myset{D} \setminus (\myset{D} \cap \myset{C}_{b,\xi})$, $L_{\mathfrak{g}}\psi_{r-1}(\myvar{x}) \neq \myvar{0}$, we obtain $\omega_1, \omega_2, \omega_3, \omega_4 $ are locally Lipschitz continuous and thus $\myvar{u}(\myvar{x})$ is locally Lipschitz continuous in $\myset{D} \setminus (\myset{D} \cap \myset{C}_{b,\xi})$. Furthermore, for $\myvar{x} \in \myset{D} \cap \myset{C}_{b,\xi}$, we have $\myvar{u}(\myvar{x}) = \myvar{u}_{\text{nom}}(\myvar{x})$ and thus $\myvar{u}(\myvar{x})$ is locally Lipschitz continuous in $\myset{D} \cap \myset{C}_{b,\xi}$.

Now we show that the control input $ \myvar{u}(\myvar{x}) $ is continuous at the boundary between $\myset{D} \cap \myset{C}_{b,\xi}$ and $\myset{D} \setminus (\myset{D} \cap \myset{C}_{b,\xi})$. Assume a Cauchy sequence of points $ \{ \myvar{x}_i \}_{i = 1,2, 3,\cdots} \subset \myset{D} \setminus (\myset{D} \cap \myset{C}_{b,\xi})$ such that $\lim_{i\to \infty } \myvar{x}_i = \myvar{x}_0   $ with $\myvar{x}_0$ at the boundary between $\myset{D} \cap \myset{C}_{b,\xi}$ and $\myset{D} \setminus (\myset{D} \cap \myset{C}_{b,\xi})$. {From} the closed-form solution \eqref{eq:controllerSol} and the facts that $  \myvar{u}_{\text{nom}}(\myvar{x}_i)$ is bounded, $ \lim_{i\to \infty } L_{\mathfrak{g}}\psi_{r-1}(\myvar{x}_i) = \myvar{0},  \lim_{i\to \infty } L_{\mathfrak{f}}\psi_{r-1}(\myvar{x}_i) = 0,$ and $\lim_{i\to \infty }  \alpha(\psi_{r-1}(\myvar{x}_i)) =  \alpha_{r} \circ \alpha_{r-1} \circ \cdots \alpha_{1}(1)  >0 $, we obtain $ \lim_{i\to \infty }   \myvar{u}(\myvar{x}_i) = \myvar{u}(\myvar{x}_0) $. Together with local Lipschitz continuity in $\myset{D} \cap \myset{C}_{b,\xi}$ and $\myset{D} \setminus (\myset{D} \cap \myset{C}_{b,\xi})$, respectively, we conclude {that} the resulting controller from \eqref{eq:controllerQP} is locally Lipschitz continuous. From Theorem \ref{thm:forw_inv_HOCBF}, the resulting controller $\myvar{u}$ guarantees forward invariance of $\myset{C}$.	 
\end{proof}

\subsection{Performance-Critical HOCBF}
In many applications, it would be favorable to know in advance when the nominal controller is implemented without any modifications,i.e., $\myvar{u}(\myvar{x}) = \myvar{u}_{nom}(\myvar{x})$ in some pre-defined set. This is useful, for example, when training a learning-based controller {or performing high-precision motion control during spacecraft rendezvous and docking}. We refer to these instances as ``performance-critical" because, to ensure satisfaction of the task, the designers have to know \textit{a priori} when the nominal control will always be implemented. 

To formally address the performance-critical tasks, we denote the \textit{safety region}\footnote{Note that the safety region may not be the same as the safe set. In the double integrator example, the safe region is the circular region $\myset{C}_{b} = \{ (\myvar{p},\myvar{v}): d^2 - \lVert \myvar{p} \rVert^2 \ge 0  \}  $ that only constrains the state $\myvar{p}$, while the safe set is a subset of $\myset{C}_{b}$ that will be rendered forward invariant.}, inside which the system states should always evolve, and the \textit{performance-critical region}, inside which the nominal control signal should be utilized, as the respective superlevel sets of smooth functions $b, s:\mathbb{R}^n \to \mathbb{R} $. Intuitively, as long as the  performance-critical region lies strictly inside the safety region, with the transformation in \eqref{eq:h_definition},  the nominal control signal is recovered in the performance-critical regions  while safety is always guaranteed.

\begin{thm} \label{thm:nominal_control_recovered}
Consider the control affine system \eqref{eq:nonlinear_dyn}. Let $b,s: \mathbb{R}^n \to \mathbb{R}$ be smooth functions, and let $b$  have least relative degree $r$ in an open set $\myset{D}$ with $\myset{C}_b \subset \myset{D}$. Assume  the conditions in Proposition \ref{prop:lipschitz_continuity_solution}  hold. Assume furthermore that  $\myset{C}_s$ is strictly bounded away from the safety boundary, i.e.,
\begin{equation}
    \myset{C}_s \subseteq \myset{C}_{b,\xi}.
\end{equation}
 Then,  with $h$ given in \eqref{eq:h_definition} and $\psi_{k}$ given in \eqref{eq:a_series_psi},
 \begin{enumerate}
     \item  $h$ is an HOCBF;
     \item  the controller \eqref{eq:controllerQP} renders the set $\myset{C}:= \bigcap_{k =1}^{r}  \myset{C}_{\psi_{k-1}} $ forward invariant for system in \eqref{eq:nonlinear_dyn};
     \item  $\myvar{u}(\myvar{x}) = \myvar{u}_{nom}(\myvar{x})$ for states $\myvar{x}\in  \myset{C}_{s}$.
 \end{enumerate}
\end{thm}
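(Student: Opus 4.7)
The plan is to recognize that parts (1) and (2) are essentially immediate consequences of the two propositions already established, and that the genuinely new content is part (3), which hinges on the flat behavior of $\chi$ on $[1,\infty)$. The whole argument should be short.

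For part (1), I would invoke Proposition \ref{prop:h_hocbf} directly: the hypotheses of Proposition \ref{prop:lipschitz_continuity_solution} (which are assumed here) already include the condition $\myset{E}\subseteq \myset{C}_{b,\xi}$ needed to construct $h$ as an HOCBF via \eqref{eq:h_definition}. Part (2) then follows immediately from Proposition \ref{prop:lipschitz_continuity_solution} (or equivalently, from Theorem \ref{thm:forw_inv_HOCBF} together with the fact that the QP controller satisfies the HOCBF constraint pointwise in $\myset{D}$).

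The substantive step is part (3). The key observation, which is already extracted inside the proof of Proposition \ref{prop:h_hocbf}, is that on the set $\myset{D}\cap \myset{C}_{b,\xi}$ one has $\chi(b(\myvar{x})/\xi)=1$ and $\tfrac{d\chi}{d\tau}(b(\myvar{x})/\xi)=0$, so that
\begin{equation*}
L_{\mathfrak{g}}\psi_{r-1}(\myvar{x})=\myvar{0},\qquad L_{\mathfrak{f}}\psi_{r-1}(\myvar{x})=0,
\end{equation*}
and moreover $\psi_{r-1}(\myvar{x})=\alpha_{r-1}\circ\cdots\circ\alpha_1(1)>0$, so $\alpha(\psi_{r-1}(\myvar{x}))=\alpha_r\circ\alpha_{r-1}\circ\cdots\circ\alpha_1(1)>0$. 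Since $\myset{C}_s\subseteq \myset{C}_{b,\xi}$ by assumption, these equalities hold for every $\myvar{x}\in \myset{C}_s$. Hence the HOCBF inequality in \eqref{eq:controllerQP} reduces at such points to the strict inequality $\alpha_r\circ\cdots\circ\alpha_1(1)>0$, which is satisfied \emph{independently of }$\myvar{u}$. Consequently, for any $\myvar{x}\in \myset{C}_s$ the nominal input $\myvar{u}_{\text{nom}}(\myvar{x})$ is feasible for the QP \eqref{eq:controllerQP} and, being the unconstrained minimizer of the quadratic cost $\|\myvar{u}-\myvar{u}_{\text{nom}}\|_2^2$, it is the unique optimizer. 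Equivalently, from the closed-form KKT expression in the proof of Proposition \ref{prop:lipschitz_continuity_solution} the multiplier $\mu$ falls into the inactive branch, giving $\myvar{u}(\myvar{x})=\myvar{u}_{\text{nom}}(\myvar{x})$.

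There is no real obstacle here; the proof is essentially a bookkeeping exercise combining the earlier propositions with the inclusion $\myset{C}_s\subseteq \myset{C}_{b,\xi}$. The only thing to be slightly careful about is ensuring that $\myvar{u}_{\text{nom}}\in U$ is implicit (the paper throughout treats $U=\mathbb{R}^m$ when discussing this construction), so that the nominal control is always admissible in the performance-critical region.
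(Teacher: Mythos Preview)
Your proposal is correct and follows essentially the same approach as the paper: Points (1) and (2) are dispatched by invoking Propositions \ref{prop:h_hocbf} and \ref{prop:lipschitz_continuity_solution}, and Point (3) is proved by noting that on $\myset{C}_{b,\xi}\supseteq \myset{C}_s$ the QP constraint is trivially satisfied so $\myvar{u}_{\text{nom}}$ is the optimizer. Your write-up is just a more detailed version of the paper's two-line proof.
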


\begin{proof}
      Point 1) and Point 2) follow from Proposition \ref{prop:h_hocbf} and Proposition \ref{prop:lipschitz_continuity_solution}, respectively. For $\myvar{x}\in  \myset{C}_{b,\xi}$, the constraint in the quadratic program \eqref{eq:controllerQP} is trivially satisfied, thus $\myvar{u}(\myvar{x}) = \myvar{u}_{nom}(\myvar{x})$ for states $\myvar{x}\in  \myset{C}_{s}$.
\end{proof}

When  $b$ has exact relative degree $r$ for all states in the safe set, we obtain the following corollary.
\begin{cor}
Consider the control affine system \eqref{eq:nonlinear_dyn}. Let $b,s: \mathbb{R}^n \to \mathbb{R}$ be smooth functions, and let $b$  have uniform relative degree $r$ in an open set $\myset{D}$ with $\myset{C}_b \subset \myset{D}$. Assume that there exists a scalar $\xi > 0$ such that  
\begin{equation}
    \myset{C}_s \subseteq \myset{C}_{b,\xi}.
\end{equation}
Then,   with $h$ given in \eqref{eq:h_definition} and $\psi_{k}$ given in \eqref{eq:a_series_psi},
 \begin{enumerate}
     \item  $h$ is an HOCBF;
     \item  the controller \eqref{eq:controllerQP} renders the set $\myset{C}:= \bigcap_{k =1}^{r}  \myset{C}_{\psi_{k-1}} $ forward invariant for system in \eqref{eq:nonlinear_dyn};
     \item  $\myvar{u}(\myvar{x}) = \myvar{u}_{nom}(\myvar{x})$ for states $\myvar{x}\in \myset{C}_{s}$.
 \end{enumerate}

\end{cor}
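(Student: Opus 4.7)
The plan is to derive this corollary as a direct specialization of Theorem \ref{thm:nominal_control_recovered}. The uniform relative degree hypothesis is strictly stronger than the least relative degree hypothesis used there, and it has the effect of trivializing the key singularity condition \eqref{eq:bounded_away_from_safety_region} that drives the construction in Proposition \ref{prop:h_hocbf}.

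First I would observe that uniform relative degree $r$ in $\myset{D}$ means $L_{\mathfrak{g}}L_{\mathfrak{f}}^{r-1}b(\myvar{x}) \neq \myvar{0}$ for every $\myvar{x}\in \myset{D}$, while the intermediate Lie derivatives vanish identically; hence $b$ satisfies Definition \ref{def:least_relative_degree} and the singularity set from Proposition \ref{prop:h_hocbf}, $\myset{E} = \{\myvar{x}\in\myset{D}: L_{\mathfrak{g}}L_{\mathfrak{f}}^{r-1}b(\myvar{x}) = \myvar{0}\}$, is empty. The inclusion $\myset{E}\subseteq \myset{C}_{b,\xi}$ is therefore vacuous for the $\xi>0$ supplied by the hypothesis $\myset{C}_s\subseteq \myset{C}_{b,\xi}$. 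With $h$ built from $b$ via \eqref{eq:h_definition} using a function $\chi$ satisfying \eqref{eq:chi_property}, Proposition \ref{prop:h_hocbf} applies verbatim and yields claim (1).

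Claims (2) and (3) then fall out exactly as in the proof of Theorem \ref{thm:nominal_control_recovered}. Claim (2) is an immediate invocation of Proposition \ref{prop:lipschitz_continuity_solution}, which gives both local Lipschitz continuity of the QP-based controller and forward invariance of $\myset{C}$. For claim (3), I would observe that on $\myset{C}_s\subseteq \myset{C}_{b,\xi}$, the calculations in the proof of Proposition \ref{prop:h_hocbf} force $L_{\mathfrak{g}}\psi_{r-1}(\myvar{x}) = \myvar{0}$ and $L_{\mathfrak{f}}\psi_{r-1}(\myvar{x}) = 0$, while $\alpha(\psi_{r-1}(\myvar{x})) = \alpha_r\circ\alpha_{r-1}\circ\cdots\circ\alpha_1(1) > 0$, so the barrier inequality in \eqref{eq:controllerQP} is strictly feasible with slack and inactive at $\myvar{u}_{\text{nom}}(\myvar{x})$; the KKT-based closed form \eqref{eq:controllerSol} then returns $\myvar{u}(\myvar{x}) = \myvar{u}_{\text{nom}}(\myvar{x})$.

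The only mildly delicate point, which I would address up front rather than as an obstacle, is that the corollary's hypotheses do not restate the Lipschitz regularity on $L_{\mathfrak{g}}\psi_{r-1}$, $L_{\mathfrak{f}}\psi_{r-1}$, and $\myvar{u}_{\text{nom}}$ needed by Proposition \ref{prop:lipschitz_continuity_solution}. I would either treat these as standing assumptions inherited from the surrounding setup (smooth $b$, sufficiently smooth $\chi$, locally Lipschitz $\mathfrak{f}$ and $\mathfrak{g}$, and the Lipschitz nominal controller assumed in Section \ref{sec:hocbf}) or state them explicitly at the beginning of the proof before invoking the two propositions. Beyond this bookkeeping, the argument is a one-line reduction to Theorem \ref{thm:nominal_control_recovered}.
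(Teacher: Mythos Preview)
Your proposal is correct and matches the paper's approach: the paper states the corollary without proof, immediately after Theorem~\ref{thm:nominal_control_recovered}, clearly intending it as the direct specialization you describe (uniform relative degree makes $\myset{E}$ empty, so \eqref{eq:bounded_away_from_safety_region} holds vacuously and Theorem~\ref{thm:nominal_control_recovered} applies). Your observation about the unstated regularity hypotheses is a fair bookkeeping point, but not a gap in the argument itself.
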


\section{An application to rigid-body attitude dynamics}
In this section, we apply the {proposed} high-order control barrier function {methodology} to rigid-body attitude dynamics.  A similar formulation  was proposed in our previous work \cite{tan2020construction}. The main difference is that here we exploit the proposed HOCBF framework to construct a safe, stabilizing control law from a simple nominal stabilizing controller. The method in \cite{tan2020construction} on the other hand uses a more complicated nominal control design. The simulations presented here show that the use of the HOCBF framework allows for modular, safe, stabilizing control design.

The attitude dynamics of a rigid-body with states {consisting of} orientation and angular velocity $(R,\myvar{\omega})$ ((1) in \cite{tan2020construction}) can be written in a control affine form   as
\begin{equation} \label{eq:embedded_dyn}
\dot{\myvar{x}} := \myvarfrak{f}(\myvar{x}) + \myvarfrak{g} \myvar{u},
\end{equation}
where $ \myvar{x} = (r_{11}, r_{12}, \cdots, r_{33}, \omega_1,  \omega_2,  \omega_3 ) \in \mathbb{R}^{12}, \mathfrak{f}(\myvar{x}) = \big(r_{12}\omega_3 - r_{13}\omega_2; r_{13}\omega_1 - r_{11}\omega_3;r_{11}\omega_2 - r_{12}\omega_1; r_{22}\omega_3 - r_{23}\omega_2;r_{23}\omega_1 - r_{21}\omega_3; r_{21}\omega_2 - r_{22}\omega_1; r_{32}\omega_3 - r_{33}\omega_2; r_{33}\omega_1 - r_{31}\omega_3; r_{31}\omega_2 - r_{32}\omega_1; J^{-1}(-[\myvar{\omega}]_{\times} J \myvar{\omega})   \big) \in \mathbb{R}^{12}, \mathfrak{g} =  \begin{pmatrix}
\myvar{0}_{9 \times 3} \\
J^{-1}
\end{pmatrix} $.  We denote  $ C_{TSO(3)} := \{ \myvar{x}\in \mathbb{R}^{12} : \begin{psmallmatrix}
x_{1} & x_{2} & x_{3} \\
x_{4} & x_{5} & x_{6} \\
x_{7} & x_{8} & x_{9} 
\end{psmallmatrix}  \in SO(3) \}  $. In the following, $(R,\myvar{\omega})$ and $\myvar{x}$ are used interchangeably.

Given some sample orientations $R_i\in SO(3), i \in \mathcal{N}$, we define the safe region  $\cup_{i \in \mathcal{N}} S_i$, where $S_i = \{ R \in SO(3): r_i(R) \ge 0  \}, r_i(R) = \epsilon - \| R - R_i \|^2_{F}/2$. Assume that the safe region is connected.
To measure the margin of the attitude trajectory  to the safe region $\cup_{i \in \mathcal{N}} S_i$, we define $b(\myvar{x}) = \sum_{i\in \mathcal{N}} s(r_i(R)/\epsilon) - \delta,$
where $ \delta >0$ is a constant,  and a smooth transition function $s(v) = \left\{ \begin{smallmatrix}
0 & v\in ( -\infty,0), \\
 \tfrac{\rho(v)}{\rho(v) + \rho(1 - v)} & v \in [0,1), \\
1 & v \in [1,\infty)
\end{smallmatrix}\right.
$ with $ \rho(v) := (1/v)e^{-1/v} $.  The associated constrained set is $ \myset{C}_b(\myvar{x}):=\{ \myvar{x} \in \myset{C}_{TSO(3)}: b(\myvar{x})\ge 0 \} $. To ensure that the trajectory evolves within $\cup_{i\in \mathcal{N}} S_i$, we conservatively require $b(\myvar{x}(t)) \ge 0$ for $t\ge 0$.

\begin{figure}[ht]
	\centering
	\begin{subfigure}[t]{0.3\linewidth}
		\includegraphics[width=\linewidth]{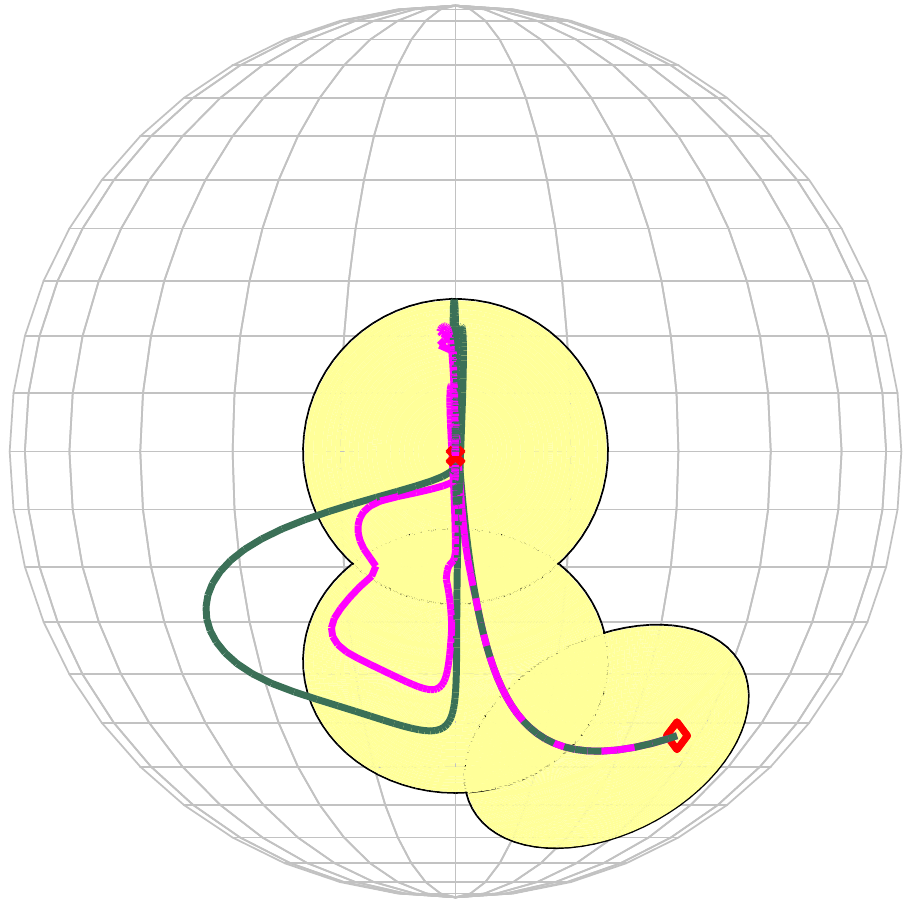}
		\caption{   $x$-axis. }   
	\end{subfigure}
	\begin{subfigure}[t]{0.3\linewidth}
		\centering\includegraphics[width=\linewidth]{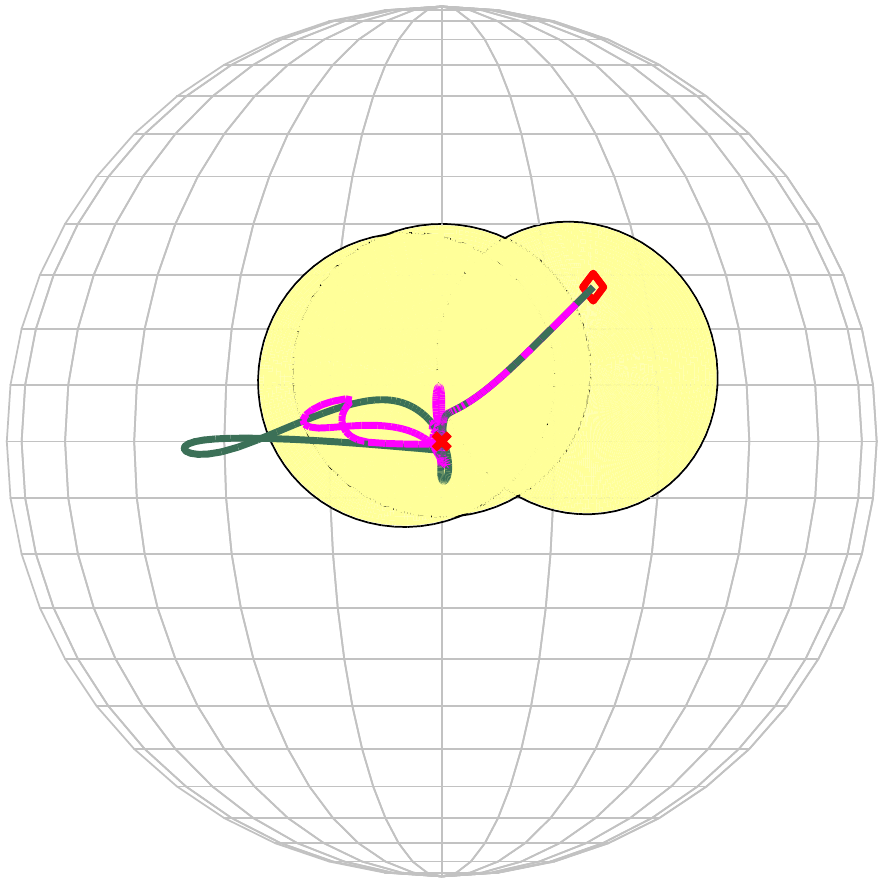}
		\caption{ $y$-axis.}
	\end{subfigure}
	\begin{subfigure}[t]{0.3\linewidth}
		\centering\includegraphics[width=\linewidth]{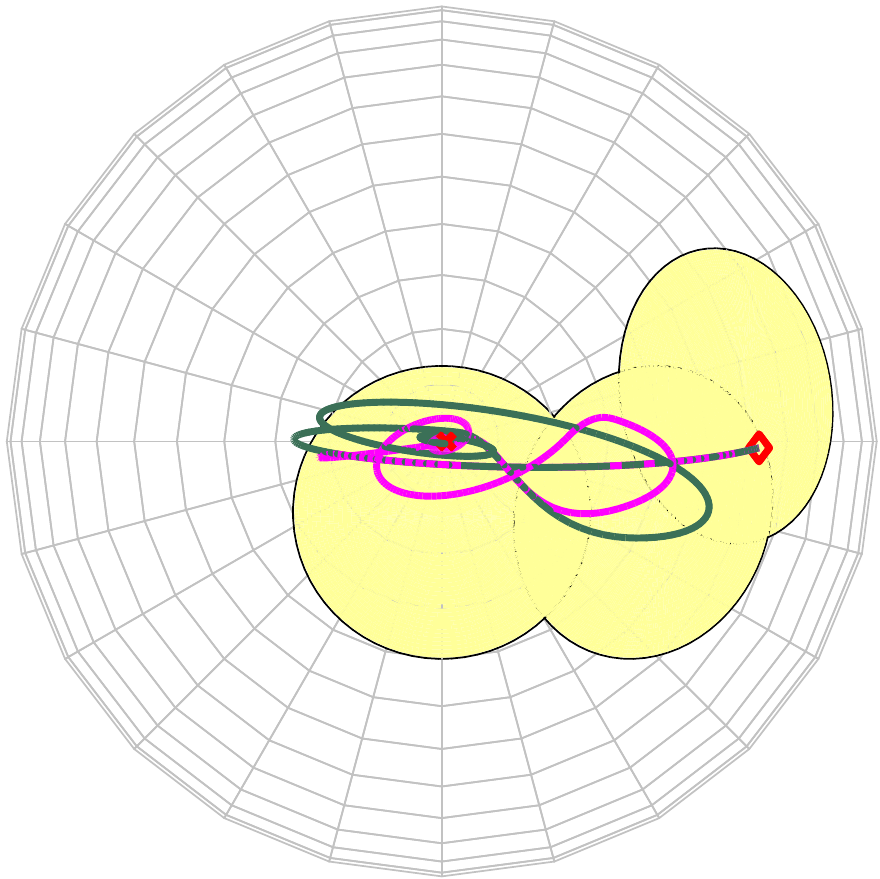}
		\caption{  $z$-axis.}
	\end{subfigure}
	\setlength{\belowcaptionskip}{-18pt}
	\caption{  Comparison of the attitude trajectories in body-fixed $xyz$ axes with additive control signals. The square point and  the cross point represent the starting attitude $R_0$ and the target attitude $R_f$, respectively, and the yellow region is the safe region.  The purple and green lines represent the results {wherein} the barrier function is in use or not in use, respectively.  }
	\label{fig:simulated_trajectory_human_input}
	
    \end{figure}

\begin{figure}[!h]
    \centering
    \begin{subfigure}[t]{\linewidth}
    \includegraphics[width=\linewidth]{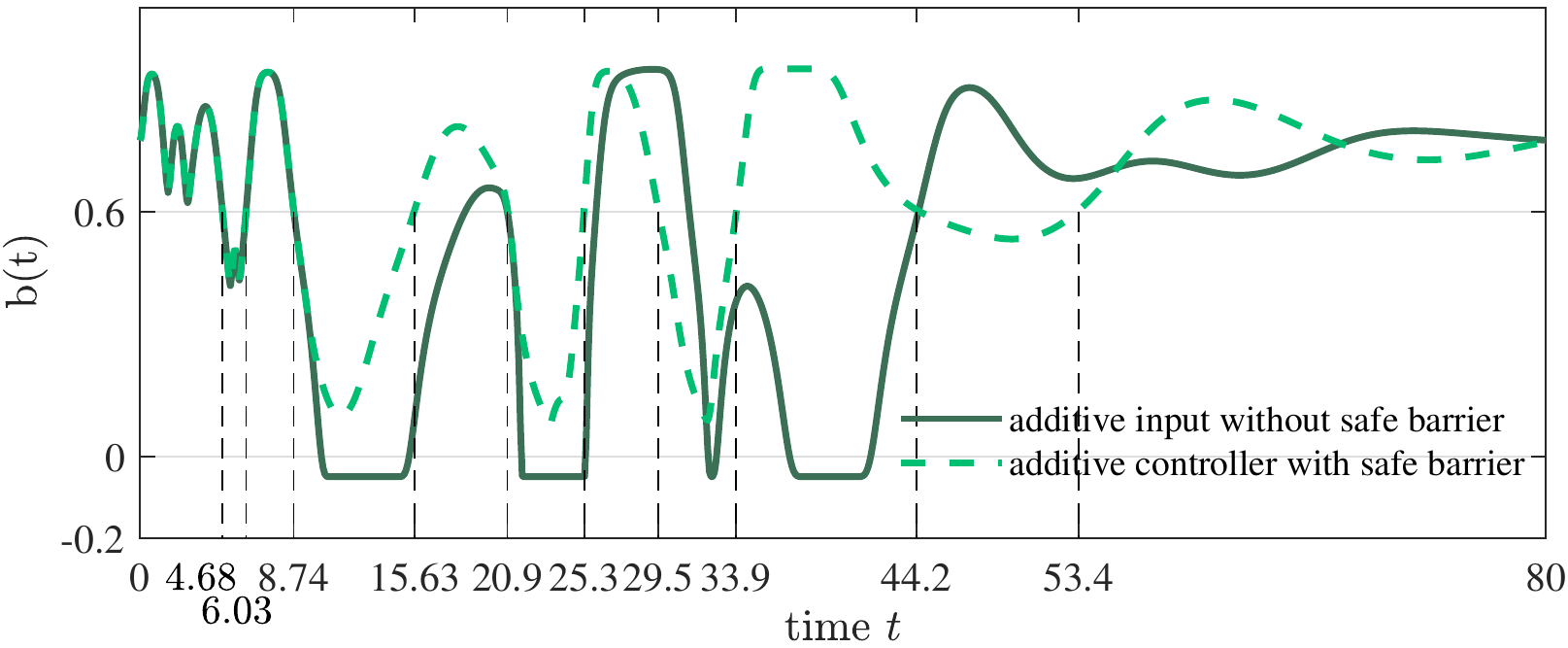}
    \caption{Time histories of $b(t)$ when additive control signals exist. For all $t$ with $b(t)>\xi = 0.6$, the system state is in the performance-critical region where the nominal control signal is used.}
    \end{subfigure}
    \begin{subfigure}[t]{\linewidth}
    \includegraphics[width=\linewidth,height =\linewidth/2 ]{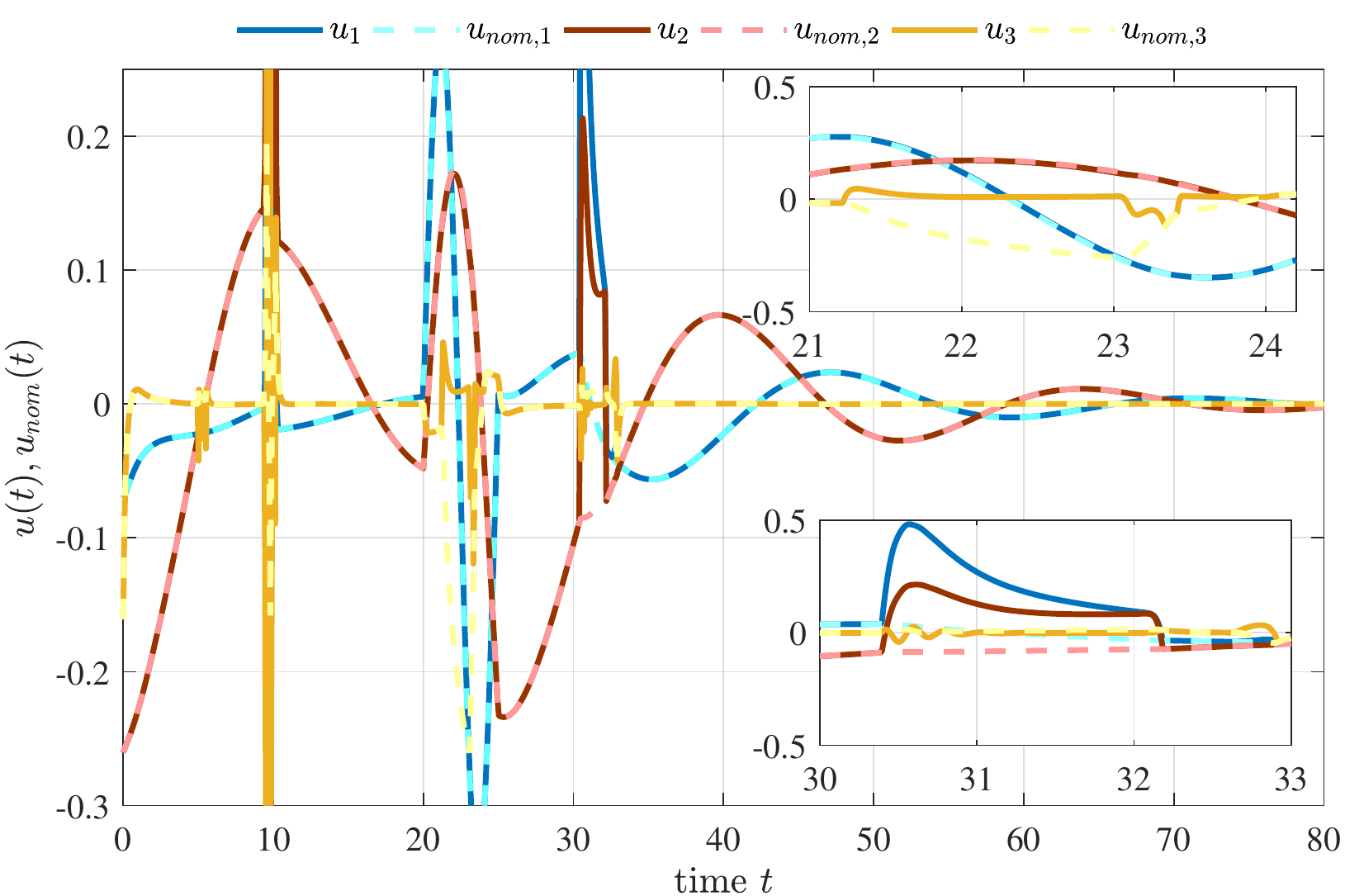}
    \caption{The time history of the nominal and actual control inputs when the barrier function is in use. The discrepancies between $\myvar{u}(t)$ and $\myvar{u}_{nom}(t)$ occur in the time interval $t\in[4.8, 5.6]\cup[9.4,10.4]\cup[21,24]\cup[30,33]$ and $b(t) < \xi = 0.6$ for all $t$ in this interval.}
    \end{subfigure}
    \caption{Attitude stabilization with additive control signals.}
    \label{fig:additive_control_scn}
\end{figure}

Following the analysis in \cite{tan2020construction}, we know that $b(\myvar{x})$ is of least relative degree $r = 2$. Moreover, the singular points, at which the exact relative degree is greater than $2$, lie on the geodesics between the sampling points $R_i, i \in \mathcal{N}$ \cite[Proposition~3]{tan2020construction}, {and} thus are bounded away from the boundary of the safe region. This fact satisfies the assumption in Proposition \ref{prop:h_hocbf}. Applying the results in this paper, we obtain: 1) $ h(\myvar{x}) =  \chi\left(\tfrac{b(\myvar{x})}{\xi} \right)$
is an HOCBF; 2) with $h(\myvar{x})$ given, the set $\myset{C} = \myset{C}_{\psi_0} \cap \myset{C}_{\psi_1}$ is forward invariant; 3)  the nominal control signal will be implemented  in any subset of $\myset{C}_{b,\xi}$ (performance-critical set).

 We consider an attitude stabilization scenario from $(R_0, \myvar{0})$ to $(R_f, \myvar{0}), J = \begin{psmallmatrix}
5.5   & 0.06 & -0.03\\
0.06 &  5.5   & 0.01\\
-0.03 & 0.01 & 0.1
\end{psmallmatrix}  \textup{kg}\cdot\textup{m}^2. $ We set $ R_f = I $, the sampling orientations $ R_3= \exp(10^{\circ}/{180^\circ} \times {\pi}[\myvar{e}_1]_{\times})$, $ R_2= \exp(30^{\circ}/{180^\circ} \times {\pi}[\myvar{e}_2]_{\times})R_3$, $ R_1 = \exp(30^{\circ}/{180^\circ} \times {\pi}[0, 0.447,0.894]_{\times})R_2 $, the initial attitude $ R_0  =  \exp(10^{\circ}/{180^\circ} \times {\pi}[\myvar{e}_1]_{\times})R_1 $, and $\epsilon = 0.1206 $, which corresponds to cell radius  $0.3491 $ rad $(20^\circ)  $.  We use the saturated stabilizing controller from \cite{lee2012robust} as the nominal controller:
\begin{equation} \label{eq:u_nom}
       \myvar{u}_{nom}(R,\myvar{\omega}) =  -k_1(R- R^\top )^{\vee} -k_2 \tanh(\myvar{\omega}),
\end{equation}
where  $\tanh(\cdot)$ is the element-wise hyperbolic tangent function.  The controller parameters are set as $ k_1 = k_2 = 0.2$. The parameters in the control barrier function are chosen as $ \delta = 0.05,\xi = 0.6,\alpha_1(v) = \alpha_2(v) = v, \chi(v) =\left\{ \begin{smallmatrix} 
(v-1)^3+1, &   \text{if }v \le 1;\\
 1,  & \text{if }v > 1. 
\end{smallmatrix}\right. $

We simulate an attitude stabilization scenario where the control signal in \eqref{eq:u_nom} is augmented with an additive signal $ \myvar{u}_{add} = 0.3* \big(\sin(2\pi \tfrac{t-20}{5}), \sin(\pi \tfrac{t-20}{5}),-\sin(\pi \tfrac{t-20}{5}) \big) $ for the time interval $ t\in [20,25] $ and view their sum as the nominal control signal {in} the quadratic program \eqref{eq:controllerQP}. This control signal simulates, for example, a human input to the system that leads to a deviation from the previous trajectory and may drive the states out of the safe region. The trajectories are shown in Fig. \ref{fig:simulated_trajectory_human_input}. When the barrier function is in use, the resulting trajectory evolves within the  safe region. Moreover, from Fig. \ref{fig:additive_control_scn}, we see that the actual control signal coincides with the nominal control signal whenever $ b(t)\geq \xi = 0.6$, which validates the performance-critical property.

Compared to the simulation results in \cite{tan2020construction}, we note that similar results are obtained here with a simple nominal stabilizing control law. This shows the effectiveness and modularity of the proposed HOCBF framework.

\section{Conclusion}

In this paper, we formulate high-order (zeroing) barrier functions {and their controlled equivalent} for nonlinear dynamical {systems}. This formulation generalizes the concept of zeroing barrier functions and similar concepts in the literature. {Our results do not require}  forward completeness of the system to show forward invariance {of the set}. More importantly, we show for the first time that the {intersection of superlevel sets} associated with the high-order barrier function, is asymptotically stable. Thanks to this property, our method generalizes the robustness results of the standard zeroing barrier function formulation. We also provide a remedy to handle the singular states that arise when implementing the minimally-invasive control law, while ensuring safety of the overall system. Finally, we derive a performance-critical property so that one can define the performance-critical regions \textit{a priori}. The proposed formulation is implemented on the non-trivial case study of rigid-body attitude dynamics.

\bibliographystyle{IEEEtran}
\bibliography{IEEEabrv,main}

\begin{thebibliography}{10}
\providecommand{\url}[1]{#1}
\csname url@samestyle\endcsname
\providecommand{\newblock}{\relax}
\providecommand{\bibinfo}[2]{#2}
\providecommand{\BIBentrySTDinterwordspacing}{\spaceskip=0pt\relax}
\providecommand{\BIBentryALTinterwordstretchfactor}{4}
\providecommand{\BIBentryALTinterwordspacing}{\spaceskip=\fontdimen2\font plus
\BIBentryALTinterwordstretchfactor\fontdimen3\font minus
  \fontdimen4\font\relax}
\providecommand{\BIBforeignlanguage}[2]{{%
\expandafter\ifx\csname l@#1\endcsname\relax
\typeout{** WARNING: IEEEtran.bst: No hyphenation pattern has been}%
\typeout{** loaded for the language `#1'. Using the pattern for}%
\typeout{** the default language instead.}%
\else
\language=\csname l@#1\endcsname
\fi
#2}}
\providecommand{\BIBdecl}{\relax}
\BIBdecl

\bibitem{garcia1989model}
C.~E. Garcia, D.~M. Prett, and M.~Morari, ``Model predictive control: theory
  and practice—a survey,'' \emph{Automatica}, vol.~25, no.~3, pp. 335--348,
  1989.

\bibitem{mayne2000constrained}
D.~Q. Mayne, J.~B. Rawlings, C.~V. Rao, and P.~O. Scokaert, ``Constrained model
  predictive control: Stability and optimality,'' \emph{Automatica}, vol.~36,
  no.~6, pp. 789--814, 2000.

\bibitem{camacho2013model}
E.~F. Camacho and C.~B. Alba, \emph{Model predictive control}.\hskip 1em plus
  0.5em minus 0.4em\relax Springer Science \& Business Media, 2013.

\bibitem{romdlony2014uniting}
M.~Z. Romdlony and B.~Jayawardhana, ``Uniting control lyapunov and control
  barrier functions,'' in \emph{53rd IEEE Conference on Decision and
  Control}.\hskip 1em plus 0.5em minus 0.4em\relax IEEE, 2014, pp. 2293--2298.

\bibitem{tee2009barrier}
K.~P. Tee, S.~S. Ge, and E.~H. Tay, ``Barrier lyapunov functions for the
  control of output-constrained nonlinear systems,'' \emph{Automatica},
  vol.~45, no.~4, pp. 918--927, 2009.

\bibitem{boyd2004convex}
S.~Boyd, S.~P. Boyd, and L.~Vandenberghe, \emph{Convex optimization}.\hskip 1em
  plus 0.5em minus 0.4em\relax Cambridge university press, 2004.

\bibitem{Ames2017}
A.~D. Ames, X.~Xu, J.~W. Grizzle, and P.~Tabuada, ``Control barrier function
  based quadratic programs for safety critical systems,'' \emph{IEEE
  Transactions on Automatic Control}, vol.~62, no.~8, pp. 3861--3876, 2016.

\bibitem{Ames2014}
A.~D. Ames, J.~W. Grizzle, and P.~Tabuada, ``Control barrier function based
  quadratic programs with application to adaptive cruise control,'' in
  \emph{Proc. {IEEE} Conf. on Decision and Control}, New York, 2014, pp.
  6271--6278.

\bibitem{prajna2006barrier}
S.~Prajna, ``Barrier certificates for nonlinear model validation,''
  \emph{Automatica}, vol.~42, no.~1, pp. 117--126, 2006.

\bibitem{prajna2007framework}
S.~Prajna, A.~Jadbabaie, and G.~J. Pappas, ``A framework for worst-case and
  stochastic safety verification using barrier certificates,'' \emph{IEEE
  Transactions on Automatic Control}, vol.~52, no.~8, pp. 1415--1428, 2007.

\bibitem{wieland2007constructive}
P.~Wieland and F.~Allg{\"o}wer, ``Constructive safety using control barrier
  functions,'' \emph{IFAC Proceedings Volumes}, vol.~40, no.~12, pp. 462--467,
  2007.

\bibitem{Xu2015a}
X.~Xu, P.~Tabuada, J.~W. Grizzle, and A.~D. Ames, ``Robustness of control
  barrier functions for safety critical control,'' in \emph{Proc. {IFAC} Conf.
  Anal. Design Hybrid Syst.}, vol.~48, 2015, pp. 54--61.

\bibitem{jankovic2018robust}
M.~Jankovic, ``Robust control barrier functions for constrained stabilization
  of nonlinear systems,'' \emph{Automatica}, vol.~96, pp. 359--367, 2018.

\bibitem{glotfelter2017nonsmooth}
P.~Glotfelter, J.~Cort{\'e}s, and M.~Egerstedt, ``Nonsmooth barrier functions
  with applications to multi-robot systems,'' \emph{IEEE Control Systems
  Letters}, vol.~1, no.~2, pp. 310--315, 2017.

\bibitem{Wang2017a}
L.~Wang, A.~Ames, and M.~Egerstedt, ``Safety barrier certificates for
  collisions-free multirobot systems,'' \emph{IEEE Transactions on Robotics},
  vol.~33, no.~3, pp. 661--674, 2017.

\bibitem{lindemann2018control}
L.~Lindemann and D.~V. Dimarogonas, ``Control barrier functions for signal
  temporal logic tasks,'' \emph{IEEE Control Systems Letters}, vol.~3, no.~1,
  pp. 96--101, 2018.

\bibitem{nguyen2016exponential}
Q.~Nguyen and K.~Sreenath, ``Exponential control barrier functions for
  enforcing high relative-degree safety-critical constraints,'' in \emph{2016
  American Control Conference (ACC)}.\hskip 1em plus 0.5em minus 0.4em\relax
  IEEE, 2016, pp. 322--328.

\bibitem{xu2018constrained}
X.~Xu, ``Constrained control of input--output linearizable systems using
  control sharing barrier functions,'' \emph{Automatica}, vol.~87, pp.
  195--201, 2018.

\bibitem{xiao2019control}
W.~Xiao and C.~Belta, ``Control barrier functions for systems with high
  relative degree,'' in \emph{2019 IEEE Conference on Decision and Control
  (CDC)}.\hskip 1em plus 0.5em minus 0.4em\relax IEEE, 2019, pp. 27--34.

\bibitem{wences2020correct}
W.~{Shaw Cortez} and D.~V. Dimarogonas, ``Correct-by-design control barrier
  functions for {E}uler-{L}agrange systems with input constraints,'' in
  \emph{2020 American Control Conference (ACC)}, 2020.

\bibitem{redheffer1972theorems}
R.~Redheffer, ``The theorems of {Bony} and {Brezis} on flow-invariant sets,''
  \emph{The American Mathematical Monthly}, vol.~79, no.~7, pp. 740--747, 1972.

\bibitem{Blanchini2015}
F.~Blanchini and S.~Miani, \emph{Set{-}Theoretic Methods in Control}, ser.
  Systems \& Control : Foundations \& Applications.\hskip 1em plus 0.5em minus
  0.4em\relax Birkh\"auser, 2015.

\bibitem{lakshmikantham1989stability}
V.~Lakshmikantham, S.~Leela, and A.~A. Martynyuk, \emph{Stability analysis of
  nonlinear systems}.\hskip 1em plus 0.5em minus 0.4em\relax Springer, 1989.

\bibitem{isidori1999nonlinear}
A.~Isidori, \emph{Nonlinear control systems II}.\hskip 1em plus 0.5em minus
  0.4em\relax Springer Science \& Business Media, 1999.

\bibitem{el2007passivity}
M.~I. El-Hawwary and M.~Maggiore, ``Passivity-based stabilization of
  non-compact sets,'' in \emph{2007 46th IEEE Conference on Decision and
  Control}.\hskip 1em plus 0.5em minus 0.4em\relax Citeseer, 2007, pp.
  1734--1739.

\bibitem{Ames2019control}
A.~D. Ames, S.~Coogan, M.~Egerstedt, G.~Notomista, K.~Sreenath, and P.~Tabuada,
  ``Control barrier functions: Theory and applications,'' in \emph{Proc.
  European Control Conf.}, 2019, pp. 3420--3431.

\bibitem{tan2020construction}
X.~Tan and D.~V. Dimarogonas, ``Construction of control barrier function and
  {$C^2$} reference trajectory for constrained attitude maneuvers,'' in
  \emph{59th IEEE Conference on Decision and Control}.\hskip 1em plus 0.5em
  minus 0.4em\relax IEEE, 2020, pp. 3329--3334.

\bibitem{lee2012robust}
T.~Lee, ``Robust adaptive attitude tracking on {$ SO(3)$} with an application
  to a quadrotor uav,'' \emph{IEEE Transactions on Control Systems Technology},
  vol.~21, no.~5, pp. 1924--1930, 2012.

\end{thebibliography}

\end{document}